  \providecommand\BibTeX{{%
    \normalfont B\kern-0.5em{\scshape i\kern-0.25em b}\kern-0.8em\TeX}}}
\algnewcommand{\LeftComment}[1]{\Statex \(\triangleright\) #1}
\newcommand{\mcA}{\mathcal A}
\newcommand{\mcT}{\mathcal T}
\newcommand{\RR}{\mathbb{R}}
\newcommand{\DOTTT}{{\tt DOTTT}}
\newcommand{\Sec}[1]{\hyperref[sec:#1]{\S\ref*{sec:#1}}} 
\newcommand{\Apx}[1]{\hyperref[apx:#1]{Appendix~\ref*{apx:#1}}} 
\newcommand{\Eqn}[1]{\hyperref[eq:#1]{(\ref*{eq:#1})}} 
\newcommand{\Fig}[1]{\hyperref[fig:#1]{Fig.\,\ref*{fig:#1}}} 
\newcommand{\Tab}[1]{\hyperref[tab:#1]{Tab.\,\ref*{tab:#1}}} 
\newcommand{\Thm}[1]{\hyperref[thm:#1]{Theorem\,\ref*{thm:#1}}} 
\newcommand{\Fact}[1]{\hyperref[fact:#1]{Fact\,\ref*{fact:#1}}} 
\newcommand{\Lem}[1]{\hyperref[lem:#1]{Lemma\,\ref*{lem:#1}}} 
\newcommand{\Prop}[1]{\hyperref[prop:#1]{Prop.~\ref*{prop:#1}}} 
\newcommand{\Cor}[1]{\hyperref[cor:#1]{Corollary~\ref*{cor:#1}}} 
\newcommand{\Conj}[1]{\hyperref[conj:#1]{Conjecture~\ref*{conj:#1}}} 
\newcommand{\Def}[1]{\hyperref[def:#1]{Definition~\ref*{def:#1}}} 
\newcommand{\Alg}[1]{\hyperref[alg:#1]{Alg.~\ref*{alg:#1}}} 
\newcommand{\Ex}[1]{\hyperref[ex:#1]{Ex.~\ref*{ex:#1}}} 
\newcommand{\Clm}[1]{\hyperref[clm:#1]{Claim~\ref*{clm:#1}}} 
\newcommand{\Obs}[1]{\hyperref[obs:#1]{Obs.~\ref*{obs:#1}}} 
\newcommand{\etal}{{et al.}\xspace}
\newcommand{\first}{f}
\newcommand{\last}{\ell}
\newcommand{\degen}{\kappa}
\newcommand{\tempOrder}{\pi}
\newcommand{\dirType}{\rho}
\newcommand{\TriType}{\mcT}
\newcommand{\TriTypeConversion}{\psi}
\newcommand{\numStaticTriangles}{\tau}
\newcommand{\threeDelta}{(\delta_{1,3},\delta_{1,2},\delta_{2,3})}
\DeclareMathOperator{\consecutivetriangleCount}{TTC}
\DeclareMathOperator{\edgeCount}{EC}
\DeclareMathOperator{\cumulativeEdgeCount}{CEC}
\newcommand{\PBL}{PBL\space}
\tikzset{%
  gnode/.style={shape=circle,minimum size=3mm,fill,draw=black}
}
\tikzset{myptr/.style={decoration={markings,mark=at position 1 with {\arrow[scale=2,>=stealth]{>}}},postaction={decorate}}}
\def\OrderOne{
    \node (1) [label={below left:\small u}] at (0,0) [nd, fill=blue] {};
	\node (2) [label={below right:\small v}] at (1,0) [nd, fill=blue] {};
	\node (3) [label={above:\small w}] at (0.5,1) [nd, fill=blue] {};

	\draw (1) to (2);
	\draw (1) to (3);
	\draw (2) to (3);
	
	\node [inner sep = 0,left] at +(0.2,0.65) {\small 2};
	\node [inner sep = 0,right] at +(0.8,0.65) {\small 3};
	\node [inner sep = 0,below] at +(0.5,-0.1) {\small 1};
	
	\node [inner sep = 0,below] at +(0.5,-0.6) {{\Large $\tempOrder_1$}};
}
\def\OrderTwo{
    \node (1) [label={below left:\small u}] at (0,0) [nd, fill=blue] {};
	\node (2) [label={below right:\small v}] at (1,0) [nd, fill=blue] {};
	\node (3) [label={above:\small w}] at (0.5,1) [nd, fill=blue] {};

	\draw (1) to (2);
	\draw (1) to (3);
	\draw (2) to (3);
	
	\node [inner sep = 0,left] at +(0.2,0.65) {\small 1};
	\node [inner sep = 0,right] at +(0.8,0.65) {\small 3};
	\node [inner sep = 0,below] at +(0.5,-0.1) {\small 2};
	
	\node [inner sep = 0,below] at +(0.5,-0.6) {{\Large $\tempOrder_2$}};
}
\def\OrderThree{
    \node (1) [label={below left:\small u}] at (0,0) [nd, fill=blue] {};
	\node (2) [label={below right:\small v}] at (1,0) [nd, fill=blue] {};
	\node (3) [label={above:\small w}] at (0.5,1) [nd, fill=blue] {};

	\draw (1) to (2);
	\draw (1) to (3);
	\draw (2) to (3);
	
	\node [inner sep = 0,left] at +(0.2,0.65) {\small 3};
	\node [inner sep = 0,right] at +(0.8,0.65) {\small 2};
	\node [inner sep = 0,below] at +(0.5,-0.1) {\small 1};
	
	\node [inner sep = 0,below] at +(0.5,-0.6) {{\Large $\tempOrder_3$}};
}
\def\OrderFour{
    \node (1) [label={below left:\small u}] at (0,0) [nd, fill=blue] {};
	\node (2) [label={below right:\small v}] at (1,0) [nd, fill=blue] {};
	\node (3) [label={above:\small w}] at (0.5,1) [nd, fill=blue] {};

	\draw (1) to (2);
	\draw (1) to (3);
	\draw (2) to (3);
	
	\node [inner sep = 0,left] at +(0.2,0.65) {\small 1};
	\node [inner sep = 0,right] at +(0.8,0.65) {\small 2};
	\node [inner sep = 0,below] at +(0.5,-0.1) {\small 3};
	
	\node [inner sep = 0,below] at +(0.5,-0.6) {{\Large $\tempOrder_4$}};
}
\def\OrderFive{
    \node (1) [label={below left:\small u}] at (0,0) [nd, fill=blue] {};
	\node (2) [label={below right:\small v}] at (1,0) [nd, fill=blue] {};
	\node (3) [label={above:\small w}] at (0.5,1) [nd, fill=blue] {};

	\draw (1) to (2);
	\draw (1) to (3);
	\draw (2) to (3);
	
	\node [inner sep = 0,left] at +(0.2,0.65) {\small 3};
	\node [inner sep = 0,right] at +(0.8,0.65) {\small 1};
	\node [inner sep = 0,below] at +(0.5,-0.1) {\small 2};
	
	\node [inner sep = 0,below] at +(0.5,-0.6) {{\Large $\tempOrder_5$}};
}
\def\OrderSix{
    \node (1) [label={below left:\small u}] at (0,0) [nd, fill=blue] {};
	\node (2) [label={below right:\small v}] at (1,0) [nd, fill=blue] {};
	\node (3) [label={above:\small w}] at (0.5,1) [nd, fill=blue] {};

	\draw (1) to (2);
	\draw (1) to (3);
	\draw (2) to (3);
	
	\node [inner sep = 0,left] at +(0.2,0.65) {\small 2};
	\node [inner sep = 0,right] at +(0.8,0.65) {\small 1};
	\node [inner sep = 0,below] at +(0.5,-0.1) {\small 3};
	
	\node [inner sep = 0,below] at +(0.5,-0.6) {{\Large $\tempOrder_6$}};
}
\def\DirOne{
    \node (1) [label={below left:\small u}] at (0,0) [nd, fill=blue] {};
	\node (2) [label={below right:\small v}] at (1,0) [nd, fill=blue] {};
	\node (3) [label={above:\small w}] at (0.5,1) [nd, fill=blue] {};

	\draw[myptr] (1) to (2);
	\draw[myptr] (1) to (3);
	\draw[myptr] (2) to (3);
	
	\node [inner sep = 0,below] at +(0.5,-0.6) {{\Large $\dirType_1$}};
}
\def\DirTwo{
    \node (1) [label={below left:\small u}] at (0,0) [nd, fill=blue] {};
	\node (2) [label={below right:\small v}] at (1,0) [nd, fill=blue] {};
	\node (3) [label={above:\small w}] at (0.5,1) [nd, fill=blue] {};

	\draw[myptr] (1) to (2);
	\draw[myptr] (3) to (1);
	\draw[myptr] (3) to (2);
	
	\node [inner sep = 0,below] at +(0.5,-0.6) {{\Large $\dirType_2$}};
}
\def\DirThree{
    \node (1) [label={below left:\small u}] at (0,0) [nd, fill=blue] {};
	\node (2) [label={below right:\small v}] at (1,0) [nd, fill=blue] {};
	\node (3) [label={above:\small w}] at (0.5,1) [nd, fill=blue] {};

	\draw[myptr] (1) to (2);
	\draw[myptr] (1) to (3);
	\draw[myptr] (3) to (2);
	
	\node [inner sep = 0,below] at +(0.5,-0.6) {{\Large $\dirType_3$}};
}
\def\DirFour{
    \node (1) [label={below left:\small u}] at (0,0) [nd, fill=blue] {};
	\node (2) [label={below right:\small v}] at (1,0) [nd, fill=blue] {};
	\node (3) [label={above:\small w}] at (0.5,1) [nd, fill=blue] {};

	\draw[myptr] (1) to (2);
	\draw[myptr] (3) to (1);
	\draw[myptr] (2) to (3);
	
	\node [inner sep = 0,below] at +(0.5,-0.6) {{\Large $\dirType_4$}};
}
\def\DirFive{
    \node (1) [label={below left:\small u}] at (0,0) [nd, fill=blue] {};
	\node (2) [label={below right:\small v}] at (1,0) [nd, fill=blue] {};
	\node (3) [label={above:\small w}] at (0.5,1) [nd, fill=blue] {};

	\draw[myptr] (2) to (1);
	\draw[myptr] (1) to (3);
	\draw[myptr] (2) to (3);
	
	\node [inner sep = 0,below] at +(0.5,-0.6) {{\Large $\dirType_5$}};
}
\def\DirSix{
    \node (1) [label={below left:\small u}] at (0,0) [nd, fill=blue] {};
	\node (2) [label={below right:\small v}] at (1,0) [nd, fill=blue] {};
	\node (3) [label={above:\small w}] at (0.5,1) [nd, fill=blue] {};

	\draw[myptr] (2) to (1);
	\draw[myptr] (3) to (1);
	\draw[myptr] (3) to (2);
	
	\node [inner sep = 0,below] at +(0.5,-0.6) {{\Large $\dirType_6$}};
}
\def\DirSeven{
    \node (1) [label={below left:\small u}] at (0,0) [nd, fill=blue] {};
	\node (2) [label={below right:\small v}] at (1,0) [nd, fill=blue] {};
	\node (3) [label={above:\small w}] at (0.5,1) [nd, fill=blue] {};

	\draw[myptr] (2) to (1);
	\draw[myptr] (1) to (3);
	\draw[myptr] (3) to (2);
	
	\node [inner sep = 0,below] at +(0.5,-0.6) {{\Large $\dirType_7$}};
}
\def\DirEight{
    \node (1) [label={below left:\small u}] at (0,0) [nd, fill=blue] {};
	\node (2) [label={below right:\small v}] at (1,0) [nd, fill=blue] {};
	\node (3) [label={above:\small w}] at (0.5,1) [nd, fill=blue] {};

	\draw[myptr] (2) to (1);
	\draw[myptr] (3) to (1);
	\draw[myptr] (2) to (3);
	
	\node [inner sep = 0,below] at +(0.5,-0.6) {{\Large $\dirType_8$}};
}
\def\TriTypeOne{
    \node (1) at (0,0) [nd, fill=red] {};
	\node (2) at (1,0) [nd, fill=green] {};
	\node (3) at (0.5,1) [nd, fill=blue] {};

	\draw[myptr] (1) to (2);
	\draw[myptr] (1) to (3);
	\draw[myptr] (3) to (2);

	\node [inner sep = 0,left] at +(0.2,0.65) {\small 3};
	\node [inner sep = 0,right] at +(0.8,0.65) {\small 2};
	\node [inner sep = 0,below] at +(0.5,-0.1) {\small 1};
	
	\node [inner sep = 0,below] at +(0.5,-0.6) {{\Large $\TriType_1$}};
}
\def\TriTypeTwo{
    \node (1) at (0,0) [nd, fill=red] {};
	\node (2) at (1,0) [nd, fill=green] {};
	\node (3) at (0.5,1) [nd, fill=blue] {};

	\draw[myptr] (1) to (2);
	\draw[myptr] (3) to (1);
	\draw[myptr] (3) to (2);

	\node [inner sep = 0,left] at +(0.2,0.65) {\small 3};
	\node [inner sep = 0,right] at +(0.8,0.65) {\small 2};
	\node [inner sep = 0,below] at +(0.5,-0.1) {\small 1};
	
	\node [inner sep = 0,below] at +(0.5,-0.6) {{\Large $\TriType_2$}};
}
\def\TriTypeThree{
    \node (1) at (0,0) [nd, fill=red] {};
	\node (2) at (1,0) [nd, fill=green] {};
	\node (3) at (0.5,1) [nd, fill=blue] {};

	\draw[myptr] (1) to (2);
	\draw[myptr] (1) to (3);
	\draw[myptr] (2) to (3);

	\node [inner sep = 0,left] at +(0.2,0.65) {\small 3};
	\node [inner sep = 0,right] at +(0.8,0.65) {\small 2};
	\node [inner sep = 0,below] at +(0.5,-0.1) {\small 1};
	
	\node [inner sep = 0,below] at +(0.5,-0.6) {{\Large $\TriType_3$}};
}
\def\TriTypeFour{
    \node (1) at (0,0) [nd, fill=red] {};
	\node (2) at (1,0) [nd, fill=green] {};
	\node (3) at (0.5,1) [nd, fill=blue] {};

	\draw[myptr] (1) to (2);
	\draw[myptr] (3) to (1);
	\draw[myptr] (2) to (3);

	\node [inner sep = 0,left] at +(0.2,0.65) {\small 3};
	\node [inner sep = 0,right] at +(0.8,0.65) {\small 2};
	\node [inner sep = 0,below] at +(0.5,-0.1) {\small 1};
	
	\node [inner sep = 0,below] at +(0.5,-0.6) {{\Large $\TriType_4$}};
}
\def\TriTypeFive{
    \node (1) at (0,0) [nd, fill=red] {};
	\node (2) at (1,0) [nd, fill=green] {};
	\node (3) at (0.5,1) [nd, fill=blue] {};

	\draw[myptr] (1) to (2);
	\draw[myptr] (1) to (3);
	\draw[myptr] (3) to (2);

	\node [inner sep = 0,left] at +(0.2,0.65) {\small 2};
	\node [inner sep = 0,right] at +(0.8,0.65) {\small 3};
	\node [inner sep = 0,below] at +(0.5,-0.1) {\small 1};
	
	\node [inner sep = 0,below] at +(0.5,-0.6) {{\Large $\TriType_4$}};
}
\def\TriTypeSix{
    \node (1) at (0,0) [nd, fill=red] {};
	\node (2) at (1,0) [nd, fill=green] {};
	\node (3) at (0.5,1) [nd, fill=blue] {};

	\draw[myptr] (1) to (2);
	\draw[myptr] (3) to (1);
	\draw[myptr] (3) to (2);

	\node [inner sep = 0,left] at +(0.2,0.65) {\small 2};
	\node [inner sep = 0,right] at +(0.8,0.65) {\small 3};
	\node [inner sep = 0,below] at +(0.5,-0.1) {\small 1};
	
	\node [inner sep = 0,below] at +(0.5,-0.6) {{\Large $\TriType_6$}};
}
\def\TriTypeSeven{
    \node (1) at (0,0) [nd, fill=red] {};
	\node (2) at (1,0) [nd, fill=green] {};
	\node (3) at (0.5,1) [nd, fill=blue] {};

	\draw[myptr] (1) to (2);
	\draw[myptr] (1) to (3);
	\draw[myptr] (2) to (3);

	\node [inner sep = 0,left] at +(0.2,0.65) {\small 2};
	\node [inner sep = 0,right] at +(0.8,0.65) {\small 3};
	\node [inner sep = 0,below] at +(0.5,-0.1) {\small 1};
	
	\node [inner sep = 0,below] at +(0.5,-0.6) {{\Large $\TriType_7$}};
}
\def\TriTypeEight{
    \node (1) at (0,0) [nd, fill=red] {};
	\node (2) at (1,0) [nd, fill=green] {};
	\node (3) at (0.5,1) [nd, fill=blue] {};

	\draw[myptr] (1) to (2);
	\draw[myptr] (3) to (1);
	\draw[myptr] (2) to (3);

	\node [inner sep = 0,left] at +(0.2,0.65) {\small 2};
	\node [inner sep = 0,right] at +(0.8,0.65) {\small 3};
	\node [inner sep = 0,below] at +(0.5,-0.1) {\small 1};
	
	\node [inner sep = 0,below] at +(0.5,-0.6) {{\Large $\TriType_8$}};
}
\begin{document}

\fancyhead{}


\settopmatter{printfolios=true}

\title{Faster and Generalized Temporal Triangle Counting, via Degeneracy Ordering}


\author{Noujan Pashanasangi}
\email{npashana@ucsc.edu}
\affiliation{%
\institution{University of California, Santa Cruz}
 \city{Santa Cruz}
 \state{California}
 \country{USA}}

\author{C. Seshadhri}
\email{sesh@ucsc.edu}
\affiliation{%
 \institution{University of California, Santa Cruz}
 \city{Santa Cruz}
 \state{California}
 \country{USA}}



\begin{abstract}
Triangle counting is a fundamental technique in network analysis, that has received much attention in various input models. The vast majority of triangle counting algorithms are targeted to static graphs. Yet, many real-world graphs are directed and \emph{temporal}, where edges come with timestamps.
Temporal triangles yield much more information, since they account for both the graph topology and the timestamps.

Temporal triangle counting has seen a few recent results, but there are varying definitions of temporal triangles. In all cases, temporal triangle patterns enforce constraints on the time interval between edges (in the triangle).
We define a general notion $(\delta_{1,3}, \delta_{1,2}, \delta_{2,3})$-temporal triangles that allows for separate time constraints for all pairs of edges.

Our main result is a new algorithm, \DOTTT{} (Degeneracy Oriented Temporal Triangle Totaler), that exactly counts all directed variants of $(\delta_{1,3}, \delta_{1,2}, \delta_{2,3})$-temporal triangles. Using the classic idea of degeneracy ordering with careful combinatorial arguments, we can prove that \DOTTT{} runs in $O(m\degen\log m)$ time, where $m$ is the number of (temporal) edges and $\degen$ is the graph degeneracy (max core number). Up to log factors, this matches the running time of the best static triangle counters. Moreover, this running time is better than existing.

\DOTTT{} has excellent practical behavior and runs twice as fast as existing state-of-the-art temporal triangle counters (and is also more general). For example, \DOTTT{} computes all types of temporal queries in Bitcoin temporal network with half a billion edges in less than an hour on a commodity machine.
\end{abstract}

\begin{CCSXML}
<ccs2012>
   <concept>
       <concept_id>10003752.10003809.10003635</concept_id>
       <concept_desc>Theory of computation~Graph algorithms analysis</concept_desc>
       <concept_significance>500</concept_significance>
       </concept>
   <concept>
       <concept_id>10002951.10003227.10003351</concept_id>
       <concept_desc>Information systems~Data mining</concept_desc>
       <concept_significance>300</concept_significance>
       </concept>
   <concept>
       <concept_id>10002950.10003624.10003633.10010917</concept_id>
       <concept_desc>Mathematics of computing~Graph algorithms</concept_desc>
       <concept_significance>500</concept_significance>
       </concept>
   <concept>
       <concept_id>10002951.10003260.10003282.10003292</concept_id>
       <concept_desc>Information systems~Social networks</concept_desc>
       <concept_significance>300</concept_significance>
       </concept>
 </ccs2012>
\end{CCSXML}

\ccsdesc[500]{Theory of computation~Graph algorithms analysis}
\ccsdesc[300]{Information systems~Data mining}
\ccsdesc[500]{Mathematics of computing~Graph algorithms}
\ccsdesc[300]{Information systems~Social networks}
\keywords{triangle counting; temporal networks; degeneracy}

\maketitle

\section{Introduction}
Triangle counting is a fundamental problem in network analysis. There has been a rich line of work on counting triangles in graphs~\cite{itai1978finding,alon1997finding,latapy2008main,seshadhri2013triadic,al2018triangle}. The triangle counts appear in form of different parameters such as \emph{clustering coefficient}~\cite{watts1998collective} and \emph{transitivity ratios}~\cite{wasserman1994social}. Triangle counting has many applications such as social networks analysis~\cite{pfeiffer2012fast}, indexing graph databases~\cite{khan2011neighborhood}, community discovery~\cite{palla2005uncovering}, and spam detection~\cite{becchetti2008efficient}. 


Much of the rich history of triangle counting has focused on {static} graphs. But many real-world networks, such as communication networks, message networks, and social interaction networks are fundamentally \emph{temporal}. Every edge has an associated timestamp~\cite{kovanen2011temporal, gaumont2016finding, farajtabar2018coevolve}. We can model these attributed networks as \emph{temporal networks} where edges have timestamp. For example, in cryptocurrency transaction networks and email networks, each link is between a sender and a receiver and has a timestamp that could be represented as a directed edge with a timestamp in a temporal network. 

\emph{Temporal triangle counts} provide a far richer set of counts than standard counts. These counts take into account the temporal ordering of edges in a triangle, and potentially impose constraints on the timestamp difference among edges. Temporal triangle and motif counting has applications in graph representation learning~\cite{tu2019gl2vec}, expressivity of graph neural networks (GNNs)~\cite{bouritsas2020improving}, network classification~\cite{tu2018network}, temporal text network analysis~\cite{vega2018foundations}, computer networks~\cite{valverde2005network}, and brain networks~\cite{dimitriadis2010tracking}.
Recently, there has been significant interest in temporal triangle and motif counting algorithms~\cite{mackey2018chronological,liu2018sampling,tu2018network,petrovic2019counting,sun2019new,tu2019gl2vec,boekhout2019efficiently,bouritsas2020improving,wang2020efficient}.

Counting temporal triangles in (directed) temporal networks introduces new challenges to that of triangle counting in static graphs. The first challenge is actually defining types of temporal triangles (or motifs). In essence, all definitions specify constraints on the time difference between edges of a triangle.
For example, Kovanen~\etal~\cite{kovanen2011temporal} restrict temporal triangles to cases where the gap between two consecutive edges in the temporal ordering is at most $\Delta$ time unit, and the two edges incident to each node are consecutive event of that node.
Paranjape-Benson-Leskovec (henceforth PBL) introduce \emph{$\delta$-temporal triangles}, where all edges of the triangle/motif have to occur within $\delta$ timesteps~\cite{paranjape2017motifs}. These varying definitions necessitate different algorithms. Our first motivating question is whether one can design algorithms for a more general notion of temporal triangles.

Secondly, there is a significant gap between the best static triangle counting algorithms and temporal triangle counters. Specifically, the classic and immensely practical triangle counting algorithm of Chiba-Nishizeki runs in time $O(m\degen)$, where $m$ is the number of edges and $\degen$ is the \emph{graph degeneracy} (or max core number). The current state-of-the-art temporal triangle counting algorithm of PBL runs in $O(m\sqrt{\numStaticTriangles})$ time, where $\numStaticTriangles$ is the total triangle count (of the underlying static graph). There is a large gap between $\degen$ (which is typically in the hundreds and thought of as a constant) and $\numStaticTriangles$ (which is superlinear in $m$).

These twin issues motivate our study. \emph{Can we define a more general notion of temporal triangles, and give an algorithm whose asymptotic running time is closer to that of static triangle counting?}

\begin{figure*}[th]
\centering
\begin{subfigure}[b]{0.33\textwidth}
\centering
\includegraphics[width=\textwidth]{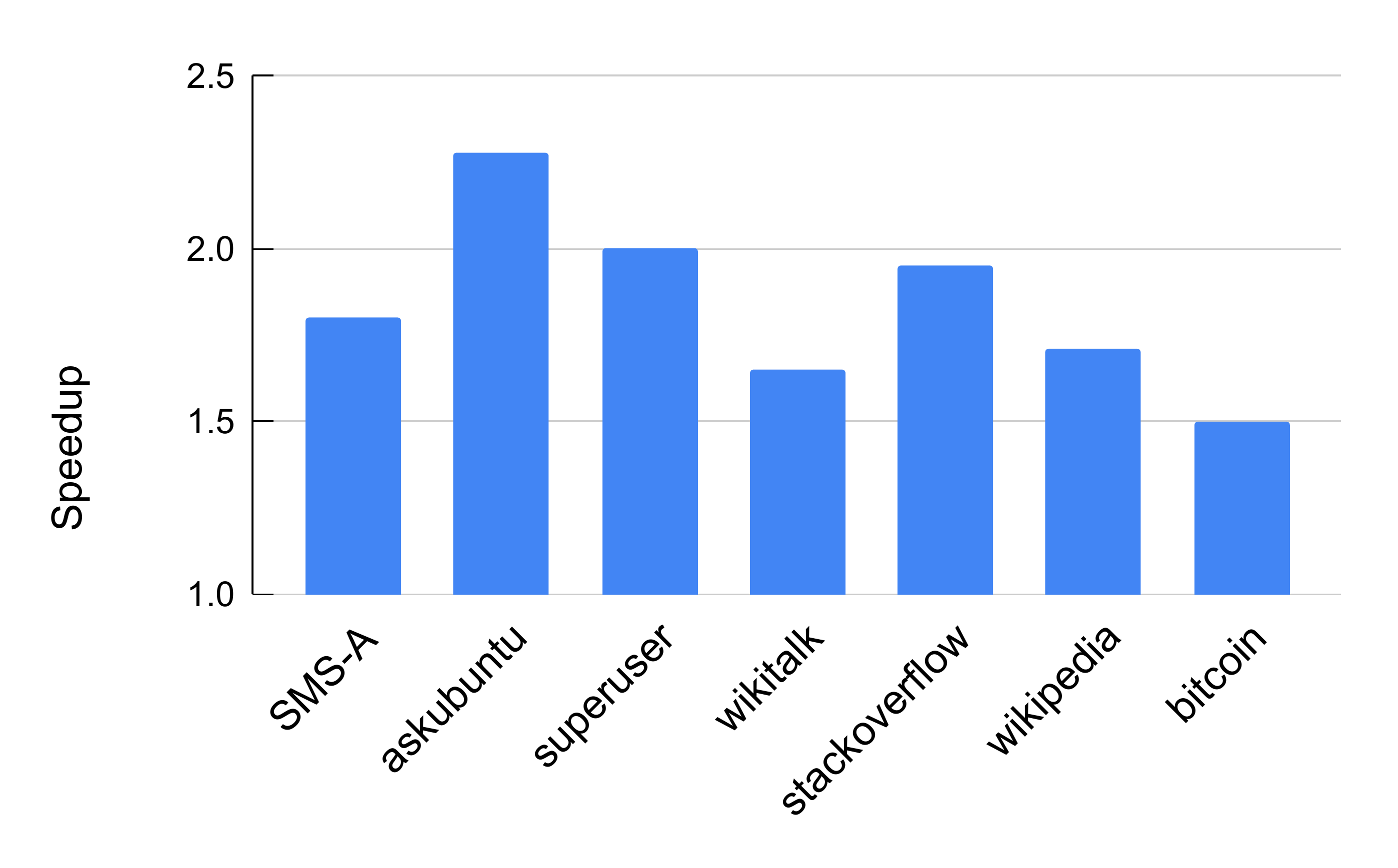}
\caption{Speedup}
\label{fig:speedup}
\end{subfigure}
\begin{subfigure}[b]{0.33\textwidth}
\centering
\includegraphics[width=\columnwidth]{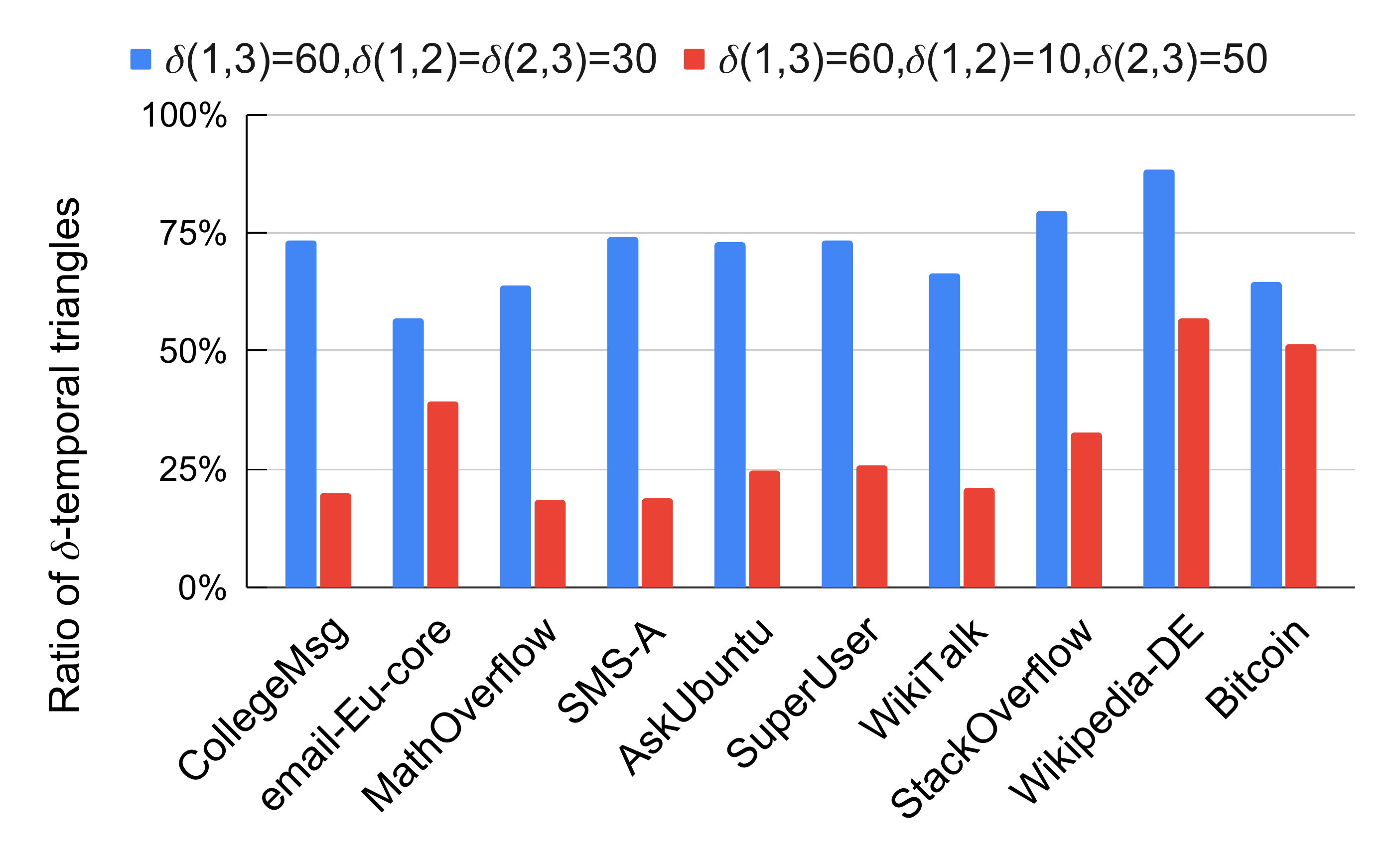}
\caption{Expressivity}
\label{fig:delta_vs_deltas}
\end{subfigure}
\begin{subfigure}[b]{0.33\textwidth}
\centering
\includegraphics[width=\textwidth]{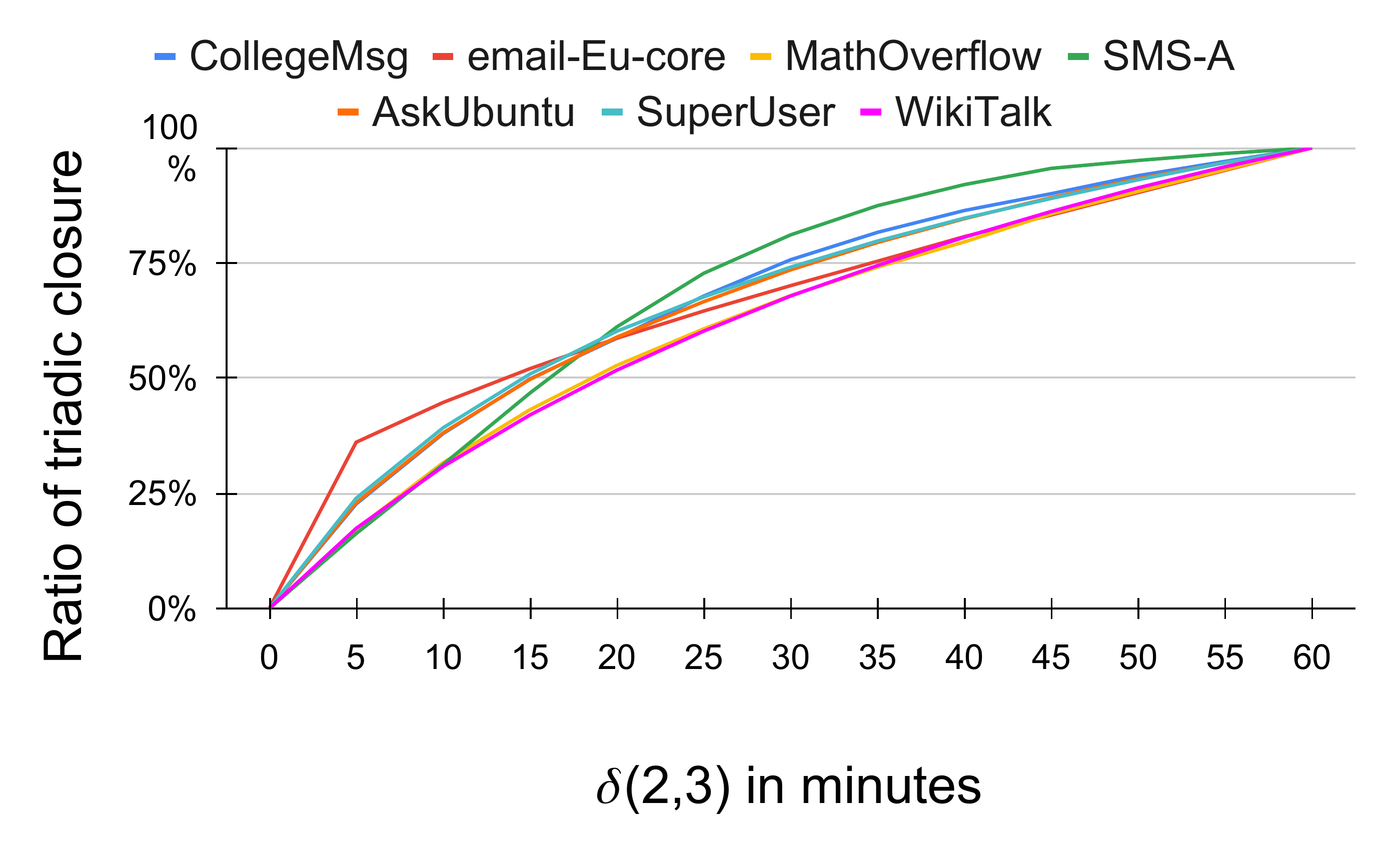}
\caption{Triadic closure over time}
\label{fig:third_edge}
\end{subfigure}
\caption{(a): The Speedup of \DOTTT{} for counting $\threeDelta$-temporal triangles over the \PBL algorithm for counting $\delta_{1,3}$-temporal triangles. (b): We fix $\delta_{1,3}$ to 1 hr. Blue bars show the ratio of (1 hr, 30 mins, 30 mins)-temporal triangles to (1 hr, 1 hr, 1 hr)-temporal triangle. The red bars illustrate the ratio for the case of (1 hr, 10 mins, 50 mins)-temporal triangles and is more restrictive. (c): We fix $\delta_{1,3}=2$ hrs and $\delta_{1,2}=1$ hr. At $t$ we plot the ratio of (2 hrs, 1 hr, t)-temporal triangles to (2 hrs, 1 hr, 1hr)-temporal triangles.}
\label{fig:count_dist_comp}
\end{figure*}

\subsection{Problem Description}
The input is a directed temporal graph $T = (V,E)$. Each edge is a tuple of the form $(u,v,t)$ where $u$ and $v$ are vertices in the temporal graph, and $t$ is a timestamp.
For notational convenience, we assume all timestamps in a temporal network are unique integers.

We introduce our notion of $\threeDelta$-temporal triangles.

\begin{definition} \label{def:temp-tri}
Let $e_1 = (u_1,v_1,t_1), e_2= (u_2,v_2,t_2)$, and $e_3 = (u_3, v_3, t_3)$, be three directed temporal edges where the induced static graph on them is a triangle, and $t_1 < t_2 < t_3$.

$(e_1,e_2,e_3)$ is a \emph{$\threeDelta$-temporal triangle} if $t_2 - t_1 \leq \delta_{1,2}$, $t_3 - t_2 \leq \delta_{2,3}$, and $t_3 - t_1 \leq \delta_{1,3}$.  
\end{definition}

Thus, we specify timestamp differences between \emph{every} pair of edges. When one also considers the direction of edges, there exist eight different types of temporal triangles as shown in~\Fig{temp_tri}. These types are distinguished by temporal ordering of edges and their direction. Thus, for any choice of $\threeDelta$, there are eight different types of temporal triangles (one corresponding to each figure in~\Fig{temp_tri}).

We observe that the notion in \Def{temp-tri} subsumes most existing temporal triangle definitions. Specifically, a $(\delta_{1,3},\delta_{1,3},\delta_{1,3})$-temporal triangle becomes a $\delta_{1,3}$-temporal triangles as defined in PBL~\cite{paranjape2017motifs}.
Temporal triangles with respect to the temporal motif definition by  by Kovanen \etal in~\cite{kovanen2011temporal} consider timestamp differences between consecutive edges in temporal ordering. By our definition, $(2\Delta,\Delta,\Delta)$-temporal triangles capture these types of temporal triangles. Although, the definition in~\cite{kovanen2011temporal} is more restrictive and requires that all edges incident to a node are consecutive events of that node.
Most existing temporal triangle counting literature uses these definitions~\cite{mackey2018chronological,liu2018sampling,sun2019new,wang2020efficient}.

We describe a simple example to see how \Def{temp-tri} offers richer temporal information. Let us measure time in hours, so $(2,1,1)$-temporal triangle is one where the first and second edge (of the triangle) are at most 1 hour apart, and similarly for the second and third edge. Now consider $(1.5, 1, 1)$-temporal triangles. The time gap between the first and second edge (as well as the second and third) is again 1 hour, but the entire triangle must occur within 1.5 hours. There is a significant difference between these cases, but previous definitions of temporal triangles would not distinguish these.

We note that more general temporal motifs, beyond triangles, have been defined. Yet, to the best of our knowledge, most fast algorithms that scale to millions of edges have been designed for triangles. Paranjape~\etal specialized algorithm for  3-edge triangle motifs (temporal triangles) is up to 56x faster than their general motif counting algorithm~\cite{paranjape2017motifs}. Our focus was on scalable algorithms, and hence, on triangle counting. We believe that generalizing \Def{temp-tri} (and our \DOTTT{} algorithm) for general motifs would be compelling future work.




\begin{figure}[th]
\centering
\resizebox{\columnwidth}{!}{
  \begin{tikzpicture}[nd/.style={scale=1,circle,draw,inner sep=2pt},minimum size = 6pt]    
    \matrix[column sep=0.8cm, row sep=0.4cm,ampersand replacement=\&]
    {
    \TriTypeOne \&
    \TriTypeTwo \&
    \TriTypeThree \&
    \TriTypeFour \\
    \TriTypeFive \&
    \TriTypeSix \&
    \TriTypeSeven \&
    \TriTypeEight \\
 };
  \end{tikzpicture}
}
\caption{\label{fig:temp_tri} \small All possible temporal triangle types. The start point of the first edge (in temporal ordering of edges) is shown in red and the end point in green.}
\end{figure}
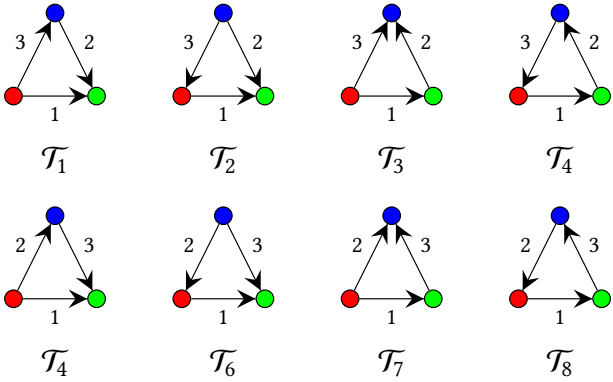

\subsection{Main Contributions} \label{sec:cont}

Our main result is the \emph{Degeneracy Oriented Temporal Triangle Totaler} algorithm, $\DOTTT{}$ that counts $\threeDelta$-temporal triangles as defined in \Def{temp-tri}. The running time is only a logarithmic overhead over static triangle counting. We detail our contributions below.

{\bf Theoretically bridging gap between temporal and static triangle counting:} Our main theorem is the following. 

\begin{theorem}\label{thm:all_runtime}
 Given $\delta_{1,3},\delta_{1,2}$ and $\delta_{2,3}$, the \DOTTT{} algorithm exactly counts each of the eight types of $\threeDelta$-temporal triangles (\Fig{temp_tri}) in a temporal graph in $O(m \degen \log m)$ time. (Here, $m$ is the total number of temporal edges, and $\degen$ is the degeneracy of the underlying static graph.)
\end{theorem}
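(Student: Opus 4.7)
The plan is to lift Chiba--Nishizeki's degeneracy-oriented triangle enumeration to the temporal setting, using the degeneracy ordering both to bound the number of triangle candidates and to bound the work of the per-triangle temporal counting subroutine.

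First, I compute a degeneracy ordering $\pi$ of the underlying static graph in $O(m+n)$ time and orient every static edge from the lower- to the higher-$\pi$ endpoint, producing a DAG of maximum out-degree $\degen$. For every pair of vertices with at least one temporal edge, I store the time-sorted list of its timestamps (preprocessing $O(m \log m)$). I then iterate over each vertex $u$ as an \emph{apex}, and for every pair $(v, w)$ of out-neighbors of $u$, test whether $\{v, w\}$ is a static edge; by the Chiba--Nishizeki bound $\sum_u \binom{d^+(u)}{2} \leq \degen \cdot m$, this step costs $O(m\degen)$ and enumerates every static triangle exactly once, each as a triple (apex, out-nbr, out-nbr).

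For each enumerated static triangle $\{u, v, w\}$ and each of the eight direction types of \Fig{temp_tri}, a subroutine counts the $\threeDelta$-temporal triples whose three edges are temporal copies of the three static edges and whose directions and temporal ordering match the type. Fixing a type pins a bijection between $\{L_{uv}, L_{uw}, L_{vw}\}$ and the three temporal slots $(L_1, L_2, L_3)$ in chronological order. The key technical claim is that this subroutine can be implemented in $O((|L_{uv}| + |L_{uw}|) \log m)$ time, i.e., linear (up to $\log m$) in the two apex-incident edges' temporal counts only; crucially the bottom list $L_{vw}$ is accessed solely through $O(\log m)$-cost binary-search range counts, not swept over. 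The subroutine sweeps through the two apex-incident timestamp lists and handles the coupled constraint $t_3 - t_1 \leq \delta_{1,3}$ by a case-split on which of $\delta_{1,3}$ and $\delta_{2,3}$ binds the latest timestamp's upper bound, combined with precomputed prefix sums into the third list.

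To obtain the global $O(m\degen \log m)$ bound, I charge the per-triangle work to apex-incident temporal edges. Each temporal edge $(u, v, t)$ at apex $u$ participates in at most $\degen$ apex-$u$ triangles (one per out-neighbor $w$ of $u$ adjacent to $v$), so $\sum_{\text{apex-}u \text{ triangles}}(|L_{uv}| + |L_{uw}|) \leq 2\degen \cdot m^+(u)$, where $m^+(u)$ is the number of temporal out-edges of $u$. Summing over apices gives $\sum_u 2\degen \cdot m^+(u) = 2\degen \cdot m$, and multiplying by the $O(\log m)$ per-entry cost and by the constant $8$ for direction types gives $O(m\degen \log m)$. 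Combined with $O(m\degen)$ static enumeration and $O(m\log m)$ preprocessing, this establishes the claimed running time.

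The main obstacle is the per-triangle subroutine. The constraint $t_3 - t_1 \leq \delta_{1,3}$ couples two non-adjacent temporal slots and blocks a naïve product-of-windows decomposition; and for every one of the eight direction types the bottom list must be accessed only through binary-search queries and never swept---otherwise the degeneracy accounting in the charging step breaks, since $\sum_{\text{triangles}} |L_{vw}|$ is only bounded by $m \cdot (\text{max triangles per static edge})$, which can be $\Omega(m \cdot n)$ rather than $O(m\degen)$. A careful case-split on which $\delta$-constraint binds, combined with precomputed prefix sums over apex-incident lists and pointer maintenance as the sweep advances, is the technical heart of the argument.
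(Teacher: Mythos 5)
Your proposal is correct and follows essentially the same route as the paper: degeneracy-orient the static graph, enumerate each static triangle from its source/apex vertex in $O(m_s\degen)$ time, count the temporal triangles per static triangle by sweeping only the two apex-incident timestamp lists with prefix sums and $O(\log m)$ binary-search range counts into the third list, and charge the work via the out-degree bound $d^+ \leq \degen$ to get $O(m\degen\log m)$ overall. The only cosmetic difference is bookkeeping: the paper iterates over all $6\times 8$ (temporal ordering, orientation) pairs and maps each to one of the eight types via a lookup table, rather than fixing a type and a single bijection of lists to slots as you describe, but this is a constant-factor presentational choice that does not affect the argument.
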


Observe that, up to a logarithmic factor, our theoretical running time for temporal triangle counting matches the $O(m\degen)$ bound for static triangle counting. As mentioned earlier, the previous best bound was $O(m\numStaticTriangles^{1/2})$. We stress that there is no dependence on the time intervals $\threeDelta$.

The idea of degeneracy orientations is tailored to static graphs, and one of our contributions is to show it can help for temporal triangle counting. A key insight in \DOTTT{} is to process (underlying) static edges in the exact order of the Chiba-Nishizeki algorithm, but carefully consider neighboring edges to capture all temporal triangles. By a non-trivial combinatorial analysis, we can prove that number of times that a temporal edge is processed is upper bounded by $\degen$. We need additional data structure tricks to get the counts efficiently, leading to an extra logarithmic factor.

{\bf Excellent practical behavior of \DOTTT:}
 \DOTTT{} consistently determines temporal triangle counts in less than ten minutes for datasets with tens of millions of edges. We only use a single commodity machine with 64GB memory, without any parallelization. We directly compare \DOTTT{} with the state-of-the-art PBL algorithm. Our algorithm is consistently faster, and as illustrated in~\Fig{speedup} we typically get a factor 1.5 speedup for larger graphs. (We note that \DOTTT{} can count a more general class of temporal triangles.)

We note that for the largest dataset in our experiments, Bitcoin (515.5M edges), \DOTTT{} only uses 64GB memory and runs in less than an hour, while existing methods ran out of memory (details in~\Sec{experimental}).

{\bf Richer triadic information from $\threeDelta$-temporal triangles:}
We demonstrate how $\threeDelta$-temporal triangles can give a richer network
analysis method. Consider~\Fig{delta_vs_deltas}. For a collection of temporal datasets, we generate the counts of (1 hr, 30 min, 30 min)-temporal triangle counts, as well as those for (1 hr, 10 min, 50 min)-temporal triangles. We plot these numbers as a ratio of (1 hr, 1 hr, 1 hr)-temporal triangles. Across the  datasets, the ratios are at most $75\%$. The red bars are typically at most $25\%$, showing the extra power of \Def{temp-tri} in distinguishing temporal triangles.

We note here that for each dataset, \DOTTT{} has the same running time for obtaining the counts for (1 hr, 30 min, 30 min)-temporal triangle and (1 hr, 10 min, 50 min)-temporal triangles, as it has no dependency on the time intervals $\threeDelta$.

An interesting study is presented in~\Fig{third_edge}. The transitivity and clustering coefficients are fundamental quantities of study in network science. In temporal graphs, in addition to these measure, the time it takes for a wedge (2-path) to close could also be of importance. (Zingnani~\etal proposed the triadic closure delay metric that capture the time delay between when a triadic closure is first possible, and when they occur~\cite{zignani2014link}.)
In \Fig{third_edge}, we fix $\delta_{1,3} = $ 2 hrs and $\delta_{1,2}$ = 1 hr. We then vary $\delta_{2,3}$ from zero to 60 minutes, and plot the ratio of $\threeDelta$-temporal triangles to (2hrs, 1hr, 1hr)-temporal triangles. We can see the trends in triadic closure with respect to the time for the third edge. 
We observe that, by and large, half the triangles are formed within 20 minutes of the first two edges appearing. And by 30 minutes, almost $75\%$ of these triangles are formed. These are examples of triadic analyses enabled by \DOTTT.

\subsection{Main challenges} \label{sec:challenge}
In a temporal graph, the number of temporal edges is typically two to three times the number of underlying static edges. Since most triangle counting algorithms are based on some form of wedge enumerations, this leads to a significant increase in the number of edges.
One method used for temporal triangle counting is to simply prune the temporal edges based on the time period~\cite{mackey2018chronological,sun2019tm}. But such algorithms have a dependency on the time period and are inefficient for large 
time periods.

Another significant challenge is the multiplicity of an individual edge can be extremely large. The same edge often occurs many hundreds to thousands of times in a temporal network (in the BitCoin network, there is an edge appearing 447K times~\Tab{runtime}). These edges create significant bottlenecks for enumeration methods. It is not clear how efficient methods on the underlying static graphs (which ignores multiplicities) can help with this problem. Triangle counting often works by finding a wedge (2-path) and checking for the third edge. With multiple temporal edges between the same pair of vertices, this method requires many edge lookups. Paranjape~\etal used a clever idea to process edges on a pair of vertices $O(\numStaticTriangles^{1/2})$ times. The challenge is to bound it by the degeneracy of the graph.

The time constraints expressed by $\threeDelta$-temporal triangles create additional challenges. A clever wedge enumeration exploiting the degeneracy may produce wedges containing the first and second edges of the triangle, the first and third, or the second and third. This makes the lookup (or counting) of possible "matches" for the remaining edge challenging, since it appears we need to look at all multiple edges. On the other hand, if we enumerated wedges that only involved the first and second edge, we cannot benefit for the efficiencies of degeneracy-based methods. Some of these problems can be circumvented for $(\delta,\delta,\delta)$-temporal triangles, but the general case is challenging.

Overall, we can state the main challenge as follows. Fast triangle counting methods (such as degeneracy based methods) necessarily ignore time constraints while generating wedges, making it hard to look for the "closing" edge. On the other hand, a method that exploits the timestamps by (say) pruning cannot get the efficiency gains of degeneracy based methods. One of the insights of \DOTTT{} is a resolution of this tension.



\section{Related Work} \label{sec:related}
There is rich history of work on triangle counting in static graphs. Various algorithm for triangle and motif counting in attributed graphs have also been proposed~\cite{WernickeRasche06,pfeiffer2014attributed,ribeiro2014discovering,mongiovi2018glabtrie,gu2018homogeneous,rossi2019heterogeneous}. Here we only focus on temporal networks and refer the reader to~\cite{al2018triangle} and the tutorial~\cite{seshadhri2019scalable} for a more detailed list of related work.

Graph orientation, in particular degeneracy ordering, is a classic idea in counting triangles and motifs in static graphs, pioneered by Chiba-Nizhizeki~\cite{chiba1985arboricity}. Recently, there has been a number of triangle counting and motif counting algorithms inspired by these techniques~\cite{jha2015path, eden2017approximately, PiSeVi17, jain2017fast, ortmann2017efficient, pashanasangi2020efficiently}. The main benefit of degeneracy ordering is that the out-degree of each vertex becomes small when we orient the static graph based on this ordering.

Kovanen~\etal called two temporal edges $\Delta T$-adjacent if they share a vertex and the difference of their timestamps are at most $\Delta T$~\cite{kovanen2011temporal}. In their definition of temporal motifs, temporal edges must represent consecutive events for a node. Redmond~\etal gave an algorithm for counting $\delta$-temporal motifs but their algorithm does not take  the temporal ordering of edges into account~\cite{redmond2013temporal}, and only counts motifs where incoming edges occur before outgoing edges. Gurukar~\etal present a heuristic for counting temporal motifs~\cite{gurukar2015commit}.

More related to our work, Paranjape-Benson-Leskovec defined the $\delta$-temporal motifs where all edges occur inside a time period $\delta$ and also the temporal ordering of edges are taken into account~\cite{paranjape2017motifs}. They gave a general algorithm for counting $k$-node $\ell$-edge motifs in temporal networks. The main idea behind their algorithm is a moving time window of size $\delta$ over the sequence of all temporal edges for each static motif matching the underlying static motif of the temporal motif of interest. For temporal triangles, their algorithm runs in $O(\tau m)$ time where $\tau$ is the number of triangles in the underlying static graph of the input temporal graph $T$, as it might enumerate temporal edges on static edges with high multiplicity $O(\tau)$ time. They also presented a specialized, more efficient algorithm for counting 3-edge temporal triangles that runs in time $O(\tau^{1/2} m)$. We call their algorithm $\PBL$ and use it as our baseline.

Mackey~\etal presented a backtracking algorithm for counting $\delta$-temporal motifs that maps edges of the motif to the edges of the host graph one by one in temporal (chronological) ordering. For each edge, it only searches through edges that occur in the correct temporal ordering and respect the time gap restriction.~\cite{mackey2018chronological}. Unlike the \PBL algortihm and ours, this algorithm could be inefficient for large values of $\delta$ as its runtime depends on the value of $\delta$.

Liu~\etal~\cite{liu2021temporal} introduced a comparative survey of temporal motif models. Boekhout~\etal gave an algorithm for counting $\delta$-temporal multi-layer temporal motifs~\cite{boekhout2019efficiently}. Li~\etal, developed an algorithm for counting temporal motifs in heterogeneous information networks~\cite{li2018temporal}. Petrovic~\etal gave an algorithm for counting causal paths in time series data on networks~\cite{petrovic2019counting}.

There has also been recent progress on approximating the counts of temporal motifs and triangles~\cite{sun2019new, wang2020efficient}. Particularly , Liu~\etal presented a sampling framework for approximating the counts of $\delta_{1,3}$-temporal motifs~\cite{liu2018sampling}.

\section{Preliminaries}
The input graph is a directed temporal graph that we denote by $T(V,E)$. Let $|V|=n$ and $|E|=m$. Temporal graph $T$ is presented as a collection of $m$ directed temporal edges $e = (u,v,t)$ where $u,v \in V$, and $t$ is the timestamp for edge $e$ where $t\in \RR$. We use $t(e)$ to denote the timestamp of a temporal edge $e$. Note that there could be multiple temporal edges on the same pair of nodes. We assume that all the timestamps in $T$ are unique. This assumption leads to the clean definition of different types of temporal triangles (~\Fig{temp_tri}),  but is not a necessity of our algorithm. To be more specific, our algorithm also works for temporal graph including temporal edges with equal timestamp.

We denote the underlying undirected static graph of $T$ as $G=(V,E_s)$ and put $|E_s| = m_s$. Two vertices in the static graph $G$ are connected if there is at least one temporal edge between them. Formally, $E_s = \{\{u,v\} \mid \exists t: (u,v,t) \in E \vee (v,u,t) \in E\}$. For $v_1,v_2 \in V$, let $\sigma((v_1,v_2))$ denote the temporal multiplicity, that is the number of temporal edges on $\{v_1,v_2\}$ directed from $v_1$ to $v_2$.

As shown in~\Fig{temp_tri}, there are eight different types of temporal triangles. The time restrictions $\delta_{1,3}$, $\delta_{1,2}$, and $\delta_{2,3}$ is not involved in definition of these types and could be applied to each of them. Note that these different types account for all possible ordering of temporal edges in the triangle in addition to their directions.

In a directed graph $G$, we use $N^+(v):\{(v,u) \in E(G)\}$ to denote the out-neighborhood and $N^-(v):\{(u,v) \in E(G)\})$ to denote the in-neighborhood of a vertex $v$. For a vertex $v\in V(G)$, we define out-degree as $d^+(v) = |N^+(v)|$ and in-degree as $d^-(v) = |N^-(v)|$.

Vertex ordering is a central idea in triangle counting and motif analysis in general~\cite{chiba1985arboricity,PiSeVi17,jain2017fast,ortmann2017efficient,turk2019revisiting,pashanasangi2020efficiently,bera2020linear}. Let $G$ be an undirected static simple graph. Given any ordering $\prec$ of $V(G)$,  we can construct a DAG $G_\prec$ by orienting each edge $\{u,v\} \in E(G)$ from $u$ to $v$ iff $u \prec v$.
Next we define \emph{degeneracy} and \emph{degeneracy ordering} formally. 

\begin{definition}
The degeneracy of a graph $G$, denoted by $\degen$, is the smallest integer $k$ such that there exists an ordering $\prec$ of $V(G)$ where $d^+_v(G_\prec) \leq k$ for each $v \in V(G)$.
\end{definition}

\begin{definition}
Degeneracy ordering of an undirected simple static graph $G$ is obtained by removing a vertex with minimum degree repeatedly. The order of the removal of vertices is the degeneracy ordering of $G$. 
\end{definition}

There is an algorithm for finding the degeneracy ordering of a graph $G$ in $O(|E(G)|)$ time~\cite{matula1983smallest}. Let $\prec$ denote the degeneracy ordering.

\section{Main Ideas}
Our algorithm first enumerates static triangles in $G$, the underlying static graph of the input temporal graph $T$. Let $\{u,v,w\}$ be a static triangle. We consider all possible temporal orderings as shown in~\Fig{temp_order}, and all possible orientations as shown in~\Fig{temp_dir}, for a temporal triangle corresponding to $\{u,v,w\}$. A temporal ordering and a temporal orientation together determine the type of the temporal triangle. For example $\tempOrder_1$ and $\dirType_8$ correspond to $\TriType_1$.~\Tab{tri_type_conversion} lists all possible pairs of temporal ordering and orientations and their corresponding type of temporal triangle as a function $\TriTypeConversion$.

We store the input temporal edges of the input temporal graph $T$ in a data structure in the CSR format. Thus, we can assume that we have constant time access to temporal edges on each pair of vertices for each direction in the order of increasing timestamps. Let $\tempOrder$ denote the temporal ordering, and $\dirType$ denote the orientation for which we want to count the temporal triangles. In this section, from here we only consider temporal edges that follow the orientation $\dirType$.

Assume that the timestamps of two of the edges of a temporal triangle corresponding to the static triangle $\{u,v,w\}$ is given. WLOG, assume that these temporal edges correspond to $\{u,v\}$ and $\{u,w\}$. We can use a binary search to find the number of temporal edges on the pair $\{v,w\}$ that are compatible with the two given temporal edges. Note that compatibility of timestamps is determined by the timestamp of edges and the temporal ordering $\tempOrder$. Thus, all we need is to enumerate all possible pairs of temporal edges on $\{u,v\}$ and $\{u,w\}$. This could be an expensive enumeration if both these static edges have high multiplicity of temporal edges.

We show that we can obtain the counts of temporal triangles on $\{u,v,w\}$ without enumeration of all possible pairs of temporal edges on $\{u,v\}$ and $\{u,w\}$. Let $e_1,\ldots,e_{\sigma(\dirType(\{u,v\}))}$ denote the sequence of temporal edges in the order of increasing timestamp on $\{u,v\}$, and $e_1^\prime,\ldots,e_{\sigma(\dirType(\{u,w\}))}^\prime$ denote that of $\{u,w\}$. We enumerate each of these sequences of temporal edges separately, and for each edge we store cumulative counts of compatible temporal edge on $\{v,w\}$. In other words, for each edge $e$ in these two sequence, we store the counts of edges $e_3$ on $\{v,w\}$ that are compatible with $e$ or any other temporal edge in the same sequence with a smaller timestamp.

Then we enumerate the temporal edges on $\{u,v\}$, and for each edge $e_i$ we use binary search to find the sequence of temporal edges $e_j^\prime,\ldots,e_k^\prime$, in increasing order of timestamp, on $\{v,w\}$ that are compatible with $e_i$. We use the cumulative counts of compatible edges on $\{v,w\}$ that we stored for $e_i$, $e_k^\prime$, and $e_j^\prime$ to compute the counts of all temporal triangles on $\{u,v,w\}$ that include $e_i$. 

Although we avoid the enumeration of pairs of temporal edges on $\{u,v\}$ and $\{u,w\}$, our algorithm could still be inefficient. The reason is that static edges $\{u,v\}$ could have high multiplicity of temporal edges and also participate in a large number of static triangles. Here is where we use the power of vertex ordering and graph orientation techniques.

\DOTTT{} enumerates static triangles in $G_\prec$, and when processing a static triangle $\{u,v,w\}$ where $u$ comes first in the degeneracy ordering $\prec$ of $G$, it only enumerates temporal edges on $\{u,v\}$ and $\{u,w\}$. Thus, each temporal edge on a pair $\{x,y\}$ where $x \prec y$, is processed only for static triangles where the third vertex is in the out-neighborhood of $x$. But we know that the out-degree of each vertex is bounded by $\degen$ in $G_\prec$. Therefore, each such temporal edge on $\{x,y\}$ is processed $O(\degen)$ times.

\begin{figure}[t]
\centering
\resizebox{0.9\columnwidth}{!}{
  \begin{tikzpicture}[nd/.style={scale=1,circle,draw,inner sep=2pt},minimum size = 6pt]    
    \matrix[column sep=1.0cm, row sep=0.4cm,ampersand replacement=\&]
    {
    \OrderOne \&
    \OrderTwo \&
    \OrderThree \\
    \OrderFour \&
    \OrderFive \&
    \OrderSix \\
 };
  \end{tikzpicture}
}
\caption{\label{fig:temp_order} All possible ordering of temporal edges of a temporal triangle corresponding to a static triangle $\{u,v,w\}$.}
\end{figure}
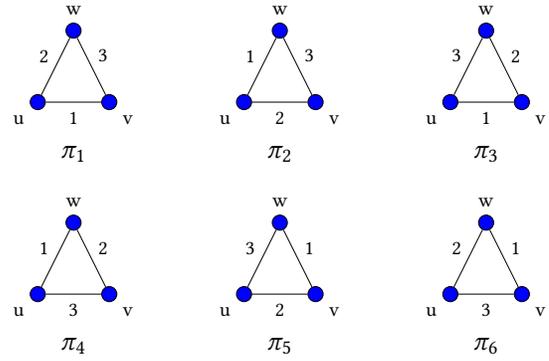

\begin{figure}[t]
\centering
\resizebox{\columnwidth}{!}{
  \begin{tikzpicture}[nd/.style={scale=1,circle,draw,inner sep=2pt},minimum size = 6pt]    
    \matrix[column sep=0.4cm, row sep=0.4cm,ampersand replacement=\&]
    {
    \DirOne \&
    \DirTwo \&
    \DirThree \&
    \DirFour \\
    \DirFive \&
    \DirSix \&
    \DirSeven \&
    \DirEight \\
 };
  \end{tikzpicture}
}
\caption{\label{fig:temp_dir} All possible orientations of temporal edges of a temporal triangle corresponding to a static triangle $\{u,v,w\}$.}
\end{figure}

\begin{table}[t]
\caption{Conversion from temporal ordering and orientation to temporal triangle type.} 
\begin{center}
\begin{tabular}{c|cccccccc}
$\TriTypeConversion$ & $\dirType_1$ & $\dirType_2$ & $\dirType_3$ & $\dirType_4$ & $\dirType_5$ & $\dirType_6$ & $\dirType_7$ & $\dirType_8$\\\hline  
$\tempOrder_1$ & $\TriType_7$ & $\TriType_6$ & $\TriType_5$ & $\TriType_8$ & $\TriType_3$ & $\TriType_2$ & $\TriType_4$ & $\TriType_1$ \\
$\tempOrder_2$ & $\TriType_5$ & $\TriType_3$ & $\TriType_7$ & $\TriType_4$ & $\TriType_6$ & $\TriType_1$ & $\TriType_8$ & $\TriType_2$ \\
$\tempOrder_3$ & $\TriType_3$ & $\TriType_2$ & $\TriType_1$ & $\TriType_4$ & $\TriType_7$ & $\TriType_6$ & $\TriType_8$ & $\TriType_5$ \\
$\tempOrder_4$ & $\TriType_1$ & $\TriType_7$ & $\TriType_3$ & $\TriType_8$ & $\TriType_2$ & $\TriType_5$ & $\TriType_4$ & $\TriType_6$ \\
$\tempOrder_5$ & $\TriType_6$ & $\TriType_1$ & $\TriType_2$ & $\TriType_8$ & $\TriType_5$ & $\TriType_3$ & $\TriType_4$ & $\TriType_7$ \\
$\tempOrder_6$ & $\TriType_2$ & $\TriType_5$ & $\TriType_6$ & $\TriType_4$ & $\TriType_1$ & $\TriType_7$ & $\TriType_8$ & $\TriType_3$ \\
\end{tabular}
\end{center}
\label{tab:tri_type_conversion}
\end{table}

\section{Our Main Algorithm}\label{sec:mainalgorithm}
In this section we describe our algorithm for getting $\threeDelta$-temporal triangles counts. Let $T=(V,E)$ be the input directed temporal graph given as a list of temporal edges sorted by timestamps. Although not necessary for our algorithm, assuming that edges are given in increasing order of timestamp is common in temporal networks as the edges are recorded in their order of occurrence~\cite{paranjape2017motifs}.

We first extract the static graph $G(V,E_s)$ from $T$. Then, we obtain the degeneracy ordering of $G$, denoted by $\prec$ using the algorithm by Matula and Beck~\cite{matula1983smallest}, and orient the edges of $G$ with respect to $\prec$ to get the DAG $G_\prec$.
We start by enumerating static triangles in $G_\prec$. This can be done in $O(m_s\degen)$ where $\degen$ is the degeneracy of  $G$~\cite{chiba1985arboricity,PiSeVi17}.

Note that all triangles in $G_\prec$ are acyclic as $G_\prec$ is a DAG, so each triangle in $G$ correspond to an acyclic triangle in $G_\prec$. In order to enumerate all triangles in $G$, we enumerate all directed edges in $G_\prec$, and for each directed edge $(u,v)$ we enumerate $N^+(u)$. For each vertex $w \in N^+(u)$, we check whether $\{u,v,w\}$ is a triangle by checking the existence of an edge between $v$ and $w$.
 
We call vertex $u$ in a static triangle $\{u,v,w\}$ the \emph{source vertex} if $u \prec v$ and $u \prec w$. Let $\{u,v,w\}$ be the triangle being processed while enumerating triangles in $G_\prec$. WLOG, assume $u$ is the source vertex in $\{u,v,w\}$. Thus, the number of times we visit  $\{u,v\}$ or $\{u,w\}$ in a static triangle are limited by $d^+_{G_\prec}(u)$ that is bounded by $\degen$. But the number of times we visit the static edge $\{v,w\}$ is not bounded by $\degen$, so we want to avoid enumerating temporal edges on $\{v,w\}$. Next, we show how to count the number of $\threeDelta$-temporal triangles corresponding to the static triangle $\{u,v,w\}$.
 
We define the temporal ordering of a temporal triangle corresponding to a static triangle $\{u,v,w\}$ as a mapping $\tempOrder: \{1,2,3\} \rightarrow \{\{u,v\},\{u,w\},\{v,w\}\}$. There are six different possible temporal orderings as shown in~\Fig{temp_order}.

We define the orientation of a temporal triangle corresponding to a static triangle $\{u,v,w\}$ as a mapping $\dirType$ from each pair of vertices of $\{u,v,w\}$ to one of the two possible ordered pairs of the same pair of vertices. For example, $\dirType_1(\{u,v\})=(u,v)$ for $\dirType_1$ in~\Fig{temp_dir}. The orientation of a temporal triangle simply determines the direction of its temporal edges. Each such temporal edge can take two possible directions, so there are eight types of orientation such a temporal triangle can take as shown in~\Fig{temp_dir}. Note that orientation of a temporal triangle is independent of its temporal ordering.

It is easy to see that the temporal ordering and orientation determine the type of the temporal triangle. But different combinations of temporal orderings and orientation could result in the same type. The temporal triangle type for all possible pairs of temporal ordering and orientation are shown in~\Tab{tri_type_conversion}.

For a temporal ordering $\tempOrder$ and for $i \in \{1,2,3\}$, we use $S_i(\tempOrder, \dirType)$ to denote the sequence of temporal edges between the pair of vertices $\tempOrder(i)$ that have the direction $\dirType(\tempOrder(i))$, in sorted order of timestamp. When $\tempOrder$ and $\dirType$ are clear from the context, we use $S_i$ instead of $S_i(\tempOrder,\dirType)$. We assume that we have access to $S_1$, $S_2$, and $S_3$ in constant time. Let $\sigma_i$ denote the length of $S_i$. We use $S_i[\ell]$ to denote the $\ell$-th edge in the sequence $S_i$, and $S_i[\ell:\ell^\prime]$ to denote the consecutive subsequence of $S_i$ ranging from $S_i[\ell]$ to $S_i[\ell^\prime]$.

For a sequence $S$ of temporal edges in increasing order of timestamp and timestamps $t$ and $t^\prime$ where $t \leq t^\prime$, let $\edgeCount([t,t^\prime], S)$ denote the number of edges in $S$ with a timestamp in the time window $[t,t^\prime]$. For given $\delta_{1,3},\delta_{1,2}$, and $\delta_{2,3}$, let $\consecutivetriangleCount(\{u,v,w\}, \tempOrder, \dirType)$ denote the number of $\threeDelta$-temporal triangles corresponding to the static triangle $\{u,v,w\}$, temporal ordering $\tempOrder$, and orientation $\dirType$.

\begin{lemma} \label{lem:delta_temporal_count}
For a static triangle $\{u,v,w\}$, a temporal ordering $\tempOrder$, and an orientation $\dirType$,
\begin{align*}
    & \consecutivetriangleCount(\{u,v,w\}, \tempOrder, \dirType) = \sum_{e_2 \in S_2} \sum\limits_{\substack{e_1 \in S_1 \\ t(e_1) \in [t(e_2)-\delta_{1,2},t(e_2)]}} & \\ &\edgeCount([t(e_2),\min(t(e_2)+\delta_{2,3},t(e_1)+\delta_{1,3})],S_3)&
\end{align*}
\end{lemma}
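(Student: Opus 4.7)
The plan is to prove the identity by showing that the nested sum on the right counts each $\threeDelta$-temporal triangle of the specified type exactly once. The proof will proceed by unpacking the definitions, establishing a clean bijection with triples $(e_1,e_2,e_3) \in S_1 \times S_2 \times S_3$ satisfying the time constraints, and then decomposing the counting.

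First I would argue that, once we fix both the temporal ordering $\tempOrder$ and the orientation $\dirType$, every $\threeDelta$-temporal triangle on the static triangle $\{u,v,w\}$ whose type is $\TriTypeConversion(\tempOrder,\dirType)$ corresponds \emph{uniquely} to a triple $(e_1,e_2,e_3)$ with $e_i \in S_i(\tempOrder,\dirType)$ for $i=1,2,3$, satisfying $t(e_1)<t(e_2)<t(e_3)$ and the three inequalities $t(e_2)-t(e_1)\leq \delta_{1,2}$, $t(e_3)-t(e_2)\leq \delta_{2,3}$, and $t(e_3)-t(e_1)\leq \delta_{1,3}$. This is immediate from the definitions of $S_i$ (the edges on pair $\tempOrder(i)$ with direction $\dirType(\tempOrder(i))$) and from \Def{temp-tri}.

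Next I would peel off the sums. Fix $e_2 \in S_2$ as the middle edge. For such $e_2$, the admissible choices of $e_1 \in S_1$ are exactly those with $t(e_1)<t(e_2)$ and $t(e_2)-t(e_1)\leq \delta_{1,2}$, which (since timestamps are unique, so $t(e_1)=t(e_2)$ is impossible) is exactly the set of $e_1 \in S_1$ with $t(e_1) \in [t(e_2)-\delta_{1,2},t(e_2)]$; this matches the inner sum's index set. Then for a fixed pair $(e_1,e_2)$, the admissible $e_3 \in S_3$ are those with $t(e_3) > t(e_2)$, $t(e_3) \leq t(e_2)+\delta_{2,3}$, and $t(e_3) \leq t(e_1)+\delta_{1,3}$, i.e. $t(e_3) \in (t(e_2),\min(t(e_2)+\delta_{2,3},t(e_1)+\delta_{1,3})]$. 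Because $S_3$ consists of edges on a pair disjoint from $\tempOrder(2)$ and timestamps are unique, no edge of $S_3$ has timestamp exactly $t(e_2)$, so this half-open interval yields the same count as the closed interval $[t(e_2),\min(t(e_2)+\delta_{2,3},t(e_1)+\delta_{1,3})]$, which is precisely $\edgeCount(\cdot,S_3)$.

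Summing this count over all admissible $(e_1,e_2)$ pairs gives the right-hand side of the lemma, and by the bijection in the first step it equals $\consecutivetriangleCount(\{u,v,w\},\tempOrder,\dirType)$. The only subtlety to double-check is the handling of the boundary cases where the interval for $e_1$ or $e_3$ is empty (in which case the inner sum or the $\edgeCount$ contributes zero) and the fact that the $\min$ correctly enforces both the $\delta_{2,3}$ and $\delta_{1,3}$ constraints simultaneously; there is no real obstacle here since both constraints are upper bounds on $t(e_3)$ and thus combine via a minimum. No combinatorial or algorithmic machinery beyond careful bookkeeping is needed, so I do not anticipate any genuine difficulty — the proof is essentially a verification that the definitions line up.
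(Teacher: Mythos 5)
Your proposal is correct and follows essentially the same route as the paper's proof: fix $e_2$, characterize the admissible $e_1\in S_1$ by the window $[t(e_2)-\delta_{1,2},t(e_2)]$, and then observe that for each such pair the compatible $e_3\in S_3$ are exactly those with timestamp in $[t(e_2),\min(t(e_2)+\delta_{2,3},t(e_1)+\delta_{1,3})]$. The only difference is that you spell out the boundary bookkeeping (strict versus closed intervals, justified by timestamp uniqueness) that the paper leaves implicit.
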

 
\begin{proof}
If temporal edge $e_1 \in S_1$ is in a $\threeDelta$-temporal triangle with edge $e_2 \in S_2$, then $t(e_1) \in [t(e_2)-\delta_{1,2},t(e_2)]$. Fix a pair of temporal edges $(e_1,e_2)$ in $S_1 \times S_2 = \{(e_1,e_2) \mid e_1 \in S_1 \wedge e_2 \in S_2\}$ where $t(e_2) \in [t(e_1), t(e_1) + \delta_{1,2}]$. A temporal edge $e_3 \in S_3$ composes a $\threeDelta$-temporal triangle with $e_1$ and $e_2$ iff $t(e_3) \in [t(e_2), \min(t(e_2) + \delta_{2,3}, t(e_1) + \delta_{1,3})]$.
\end{proof}

For a triangle $\{u,v,w\}$ where $u$ is the source vertex, we divide all six possible temporal orderings into three categories based on the place of $\{v,w\}$ in them. Recall that we want to avoid enumerating temporal edges on $\{v,w\}$. In $\tempOrder_1$ and $\tempOrder_2$, $\{v,w\}$ is assigned to the third place. $\{v,w\}$ is assigned to the second place in $\tempOrder_3$ and $\tempOrder_4$, and finally to the first place in temporal ordering $\tempOrder_5$ and $\tempOrder_6$.

{\bf Temporal orderings $\tempOrder_1$ and $\tempOrder_2$:}
Using~\Lem{delta_temporal_count}, one can compute $\consecutivetriangleCount(\{u,v,w\}, \tempOrder, \dirType)$ by enumerating pairs of temporal edges in $S_1 \times S_2 = \{(e_1,e_2) \mid e_1 \in S_1 \wedge e_2 \in S_2\}$. For each pair we compute $\edgeCount([t(e_2), \min(t(e_2) + \delta_{2,3}, t(e_1) + \delta_{1,3}),S_3)$ using binary search. To get the final counts we sum $\edgeCount([t(e_2), \min(t(e_2) + \delta_{2,3}, t(e_1) + \delta_{1,3}),S_3)$ over all pairs $(e_1,e_2) \in S_1 \times S_2$. But enumerating $S_1 \times S_2$ could be expensive and this process overall runs in time $O(\sigma_1\sigma_2 \log(\sigma_3))$. Next, we show that we can compute the same count by enumerating edges in $S_1$ and $S_2$ separately and storing cumulative counts of compatible edges on $S_3$ for each edge.
 
For $i,j \in \{1,2,3\}$ where $i \neq j$, and  $\ell, \ell^\prime \in \{1,\ldots,\sigma_i\}$ where $\ell \leq \ell^\prime$ we use $\cumulativeEdgeCount_{+\delta_{1,3}}(S_i[\ell:\ell^\prime],S_j)$ to denote the cumulative count of edges in $S_j$ with a timestamp in $[t(e),t(e)+\delta_{1,3}]$ for edges $e$ in the sequence $S_i[\ell:\ell^\prime]$. Formally
\begin{align*}
\cumulativeEdgeCount_{+\delta_{1,3}}(S_i[\ell:\ell^\prime],S_j) = \sum\limits_{\ell \leq r \leq \ell^\prime} \edgeCount([t(S_i[r]),t(S_i[r])+\delta_{1,3}], S_j).
\end{align*}

Cumulative counts $\cumulativeEdgeCount_{\infty}$, $\cumulativeEdgeCount_{-\delta_{1,3}}$, and $\cumulativeEdgeCount_{-\infty}$, are defined the same way with time intervals $[t(e),\infty)$, $[t(e)-\delta_{1,3},t(e)]$, and $(-\infty,t(e)]$, respectively. $\cumulativeEdgeCount_{-\delta_{1,2}}$, $\cumulativeEdgeCount_{+\delta_{1,2}}$, $\cumulativeEdgeCount_{-\delta_{2,3}}$, and $\cumulativeEdgeCount_{+\delta_{2,3}}$ are defined similarly.
Note that we can compute $\cumulativeEdgeCount(S_i[1:\ell],S_j)$ for each $\ell \in \{1,\ldots,\sigma_i\}$ with one pass over $S_i$, and once we have these counts, we can get the cumulative counts $\cumulativeEdgeCount(S_i[\ell^\prime:\ell^{\prime\prime}],S_j)$,  for each consecutive subsequence $S_i[\ell^\prime:\ell^{\prime\prime}]$ of $S_i$ as follows.
\begin{align*}
\cumulativeEdgeCount_{+\delta_{1,3}}(S_i[\ell^\prime:\ell^{\prime\prime}],S_j) & = \cumulativeEdgeCount_{+\delta_{1,3}}(S_i[1:\ell^{\prime\prime}],S_j) \\ & - \cumulativeEdgeCount_{+\delta_{1,3}}(S_i[1:\ell^\prime-1],S_j).
\end{align*}
where $\cumulativeEdgeCount_{+\delta_{1,3}}(S_i[1:0],S_j) = 0$.

We first enumerate edges in $S_1$.
For each edge $e_1 \in S_1$ we compute $\cumulativeEdgeCount_{+\delta_{1,3}}(S_1[1:\ell],S_3)$ and $\cumulativeEdgeCount_{\infty}(S_1[1:\ell],S_3)$ for each $\ell \in \{1,\ldots,\sigma_1\}$ and store them for $e_1$.
Next, we enumerate edges in $S_2$ and compute $\cumulativeEdgeCount_{\infty}(S_2[1:\ell],S_3)$ for each $\ell \in \{1,\ldots,\sigma_2\}$.

Fix an edge $e_2 \in S_2$. Let $\ell_{\first}$ and $\ell_{\last}$ be the indices of the first and last edges in $S_1$ with a timestamp in $[t(e_2)-\delta_{1,2},t(e_2)]$. Also let $\ell_{\delta_{2,3}}$ be the index of the last edge in $S_1$ with a timestamp at most $t(e_2)-\delta_{1,3}+\delta_{2,3}$. We can find $\ell_{\first},\ell_{\delta_{2,3}}$, and $\ell_{\last}$ using a binary search on $S_1$. Note that $\delta_{1,3} \leq \delta_{1,2} + \delta_{2,3}$, thus $\ell_{\first} \leq \ell_{\delta_{2,3}} \leq \ell_{\last}$.

First consider the temporal edges $S_1[i]$ where $\ell_{\first} \leq i \leq \ell_{\delta_{2,3}}$. For any such edge $t(S_1[i])+\delta_{1,3} \leq t(e_2)+\delta_{2,3}$, so the timestamp of compatible edges in $S_3$ lie in the interval $[t(e_2),t(S_1[i])+\delta_{1,3}]$. Having stored the cumulative counts described above, we can compute the number of pairs of temporal edges $(e_1,e_3) \in S_1[\ell_{\first}:\ell_{\delta_{2,3}}] \times S_3$ that compose a $\threeDelta$-temporal triangle on $\{u,v,w\}$ with $e_2$, complying with $\tempOrder_1$ and $\dirType$, as follows.
\begin{align*}
    & \sum\limits_{\ell \leq i \leq \ell_{\delta_{2,3}}} \edgeCount([t(e_2),t(S_1(i))+\delta_{1,3}], S_3) = & \\
    & \cumulativeEdgeCount_{+\delta_{1,3}}(S_1[\ell_{\first}:\ell_{\delta_{2,3}}], S_3) - \cumulativeEdgeCount_{\infty}(S_1[\ell_{\first}:\ell_{\delta_{2,3}}],S_3) & \\
     + & (\ell_{\delta_{2,3}}-\ell_{\first}+1) \cdot \edgeCount([t(e_2),\infty),S_3)  & 
\end{align*}

Now, we count the number of temporal edges in $S_3$ that compose a triangle with $e_2$ and $S_1[i]$, where $\ell_{\delta_{2,3}} < i \leq \ell_{\last}$. For a temporal edge $S_1[i]$ where $\ell_{\delta_{2,3}} < i \leq \ell_{\last}$, we have $t(S_1[i]) + \delta_{1,3} > t(e_2)+\delta_{2,3}$. Thus, there are $\edgeCount([t(e_2), t(e_2)+\delta_{2,3}], S_3)$ edges on $S_3$ that compose a triangle with $e_2$ and $S_1[i]$. So the final count of pairs $(e_1,e_3) \in S_1 \times S_3$ that are in a temporal triangle with $e_2$ corresponding to the static triangle $\{u,v,w\}$ can be computed as follows. 
\begin{align*}
& \sum\limits_{\substack{e_1 \in S_1, t(e_1) \in \\ [t(e_2)-\delta_{1,2},t(e_2)]}} \edgeCount([t(e_2),\min(t(e_2)+\delta_{2,3},t(e_1)+\delta_{1,3})],S_3) & \\
    = & \sum\limits_{\ell_{\first} \leq i \leq \ell_{\delta_{2,3}}} \edgeCount([t(e_2),t(S_1(i))+\delta_{1,3}],S_3) &  \\
    + & (\ell_{\last} - \ell_{\delta_{2,3}}) \cdot \edgeCount([t(e_2),t(e_2)+\delta_{2,3}],S_3) &  \\
\end{align*}

By~\Lem{delta_temporal_count}, to get $\consecutivetriangleCount(\{u,v,w\}, \tempOrder, \dirType)$, we only need to sum these counts over edges in $S_2$. Let $\langle u,v,w \rangle$ denote a static triangle where $u \prec v \prec w$.~\Alg{tempOrder_1_count} formalizes the procedure described above for computing $\consecutivetriangleCount(\langle u,v,w \rangle, \tempOrder, \dirType)$ where $\tempOrder$ is either $\tempOrder_1$ or $\tempOrder_2$.

\begin{algorithm}[th]
\setstretch{1.2}
\caption{Counting $\threeDelta$-temporal triangles corresponding to a static triangle and temporal orientation $\tempOrder_1$ or $\tempOrder_2$}\label{alg:tempOrder_1_count}
\begin{algorithmic}[1]
\Procedure{TTC-vw3}{$\delta_{1,3}$, $\delta_{1,2}$, $\delta_{2,3}$,$\langle u,v,w \rangle$, $\tempOrder$, $\dirType$}
\LeftComment $\tempOrder(\{v,w\})=3$
\State Enumerate $S_1$ and compute $\cumulativeEdgeCount_{+\delta_{1,3}}$ and $\cumulativeEdgeCount_{\infty}$ on $S_3$.
\State count = 0
\For{$i= 1,\ldots,\sigma_2$}
\State Let $\ell_{\first} = \textsc{lowerBound}(t(S_2[i])-\delta_{1,2}, S_1)$
\State Let $\ell_{\delta_{2,3}} = \textsc{upperBound}(t(S_2[i])-\delta_{1,3}+\delta_{2,3}, S_1)$
\State Let $\ell_{\last}= \textsc{upperBound}(t(S_2[i]), S_1)$
\LeftComment Edges in $S_1[\ell_{\first}:\ell_{\delta_{2,3}}]$
\State count $+= \cumulativeEdgeCount_{+\delta_{1,3}}(S_1[\ell_{\first}:\ell_{\delta_{2,3}}],S_3)$
\State count $-= \cumulativeEdgeCount_{\infty}(S_1[\ell_{\first}:\ell_{\delta_{2,3}}],S_3)$
\State count $+= (\ell_{\delta_{2,3}}-\ell_{\first}+1) \cdot \edgeCount([t(S_2[i]),\infty), S_3)$
\LeftComment Edges in $S_1[\ell_{\delta_{2,3}}+1:\ell_{\last}]$
\State count $+= (\ell_{\last} - \ell_{\delta_{2,3}}) \cdot \edgeCount([t(S_2[i]), t(S_2[i])+\delta_{2,3}], S_3)$
\EndFor
\State \Return count
\EndProcedure
\end{algorithmic}
\end{algorithm}

The algorithms for the category of $\tempOrder_3$ and $\tempOrder_4$ and the category of $\tempOrder_5$ and $\tempOrder_6$ are similar to~\Alg{tempOrder_1_count}, so we defer these to~\Apx{missing_algo}. Here, we suffice to say that similar to~\Alg{tempOrder_1_count}, in algorithms for the remaining temporal orderings, for each static triangle $\{u,v,w\}$ where $u$ is the source vertex, we only enumerate temporal edges on $\{u,v\}$ and $\{u,w\}$. And for each temporal edge we perform a constant number of binary searches on temporal edges on the other two static edges of the static triangle $\{u,v,w\}$.

\subsection{Getting the Counts for All Temporal Triangle Types}

Now that we can count $\threeDelta$-temporal triangles for each combination of temporal ordering and orientation, it only remains to get the counts for each temporal triangle type (~\Fig{temp_tri}). Let $\TriTypeConversion(\tempOrder,\dirType)$ denote the triangle type for $\tempOrder$ and $\dirType$.~\Alg{count_all_delta} gets the counts for all eight types. Now, we can finally prove~\Thm{all_runtime}.

\begin{algorithm}[th]
\caption{Counting $\threeDelta$-temporal triangles for each temporal triangle types}\label{alg:count_all_delta}
\begin{algorithmic}[1]
\Procedure{Count-temporal-triangles}{$T$, $\delta_{1,3}$, $\delta_{1,2}$, $\delta_{2,3}$}
\State Extract the static graph $G$ of $T$.
\State Find the degeneracy ordering $\prec$ of $G$.
\State Derive $G_\prec$ by orienting $G$ with respect to $\prec$.
\State Initialize Counts to 0 for $\TriType_1,\ldots,\TriType_8$.
\ForAll{Static triangles $\{u,v,w\}$}
\Statex\Comment{WLOG let $u\prec v \prec w$}
\ForAll{Temporal ordering $\tempOrder$ and orientation $\dirType$}
\State Counts($\TriTypeConversion(\tempOrder,\dirType)$) += $\consecutivetriangleCount(\langle u,v,w \rangle, \tempOrder, \dirType)$
\Comment{~\Tab{tri_type_conversion}}
\Statex\Comment{Using~\Alg{tempOrder_1_count},~\Alg{tempOrder_3_count}, and~\Alg{tempOrder_5_count}}
\EndFor
\EndFor
\EndProcedure
\end{algorithmic}
\end{algorithm}




 
\begin{proof}[Proof of Theorem~\ref{thm:all_runtime}]

Extracting the static graph $G$ from $T$ can be done in $O(m)$ time. We simply enumerate all temporal edges of $T$ and for each temporal edge $e=(v_1,v_2,t)$, we add an static edge between $\{v_1,v_2\}$ in $G$ if they are not connected already. The degeneracy ordering of $G$ could be obtained in $O(m_s)$ time~\cite{matula1983smallest}, and $G_\prec$ could also be derived in time $O(m_s)$. For enumerating static triangles we first enumerate each edge in $G_\prec$. For each edge $(u,v)$, we enumerate $N^+(u)$ which takes $O(\degen)$ as $d^+_{G_\prec}(u) \leq \degen$. We can lookup if there is an edge between $v$ and $w$ in constant time. Thus, enumerating triangles take $O(m_s \cdot \degen)$ time overall.
 
Note that for each static triangle we only enumerate temporal edges on static edges incident to the source vertex. So for the static triangle $\langle u,v,w \rangle$, we only enumerate temporal edges on the pairs $\{u,v\}$ and $\{u,w\}$. While processing a temporal edge during enumeration of temporal edges on $\{u,v\}$ or $\{u,w\}$, we either perform a constant time operation, or spend $O(\log(\sigma_{\max}))$ time for a constant number of binary searches over the temporal edges of the other two static edges in the static triangle $\langle u,v,w \rangle$. Thus,
\begin{align*}
    T(\mcA) = O(m_s \cdot \degen + \sum\limits_{\langle u,v, w \rangle} (\sigma(u,v) + \sigma(u,w)) \log(\sigma_{\max}))
\end{align*}
where $\mcA$ denotes~\Alg{count_all_delta}, and $T(\mcA)$ denotes the worst case time complexity of $\mcA$.

For each vertex $u \in V$, $d^+_{G_\prec}(u) \leq \degen$, so each edge $(u,v)$ in $G_\prec$, is a part of at most $\degen$ static triangles where $u$ is the source vertex. Therefore, the temporal edges on each edge $\{u,v\}$ in $G$ are enumerated at most $O(\degen)$ times. Thus, 
\begin{align*}
    T(\mcA) = O(m_s \cdot \degen + \sum\limits_{\{u,v\} \in E_s} (\sigma(u,v) + \sigma(v,u)) \cdot  \degen \log(\sigma_{\max})).
\end{align*}
Hence,
\begin{align*}
    T(\mcA) = O(m \degen \log(\sigma_{\max})).
\end{align*}
\end{proof}

\section{Experimental Evaluations}\label{sec:experimental}
\begin{table*}[!ht]
\scriptsize
\caption{Descriptions of the datasets and runtime of \DOTTT{} and \PBL.} 
\begin{center}
\begin{tabular}{l|rrrrrrr|rr}
\hline
\textbf{dataset} & $\#$\textbf{vertices} & $\#$\textbf{edges} & $\#$\textbf{static edges} & $\#$\textbf{static triangles} & \textbf{degeneracy} & \textbf{max multiplicity} & \textbf{time span (years)} & \textbf{\DOTTT{} runtime} & \textbf{\PBL runtime}\\\hline
CollegeMsg & 1.9K &	59.8K & 13.8K & 14.3K & 20 & 98 & 0.51 & 0.09 & 0.07\\
email-Eu-core & 986 & 332K & 16.1K & 105K & 34 & 2.8K & 2.2 & 2.31 & 3.37\\
MathOverflow & 24.7K & 390K & 188K & 1.4M & 78 & 225 & 6.46 & 3.17 & 3.6\\
SMS-A & 44.1K & 545K & 52.22K & 10K & 9 & 5.3K & 0.92 & 0.45 & 0.81\\
AskUbuntu & 157K & 727K & 456K & 680K & 48 & 154 & 7.09 & 2.23 & 5.08\\
SuperUser & 192K & 1.11M & 715K & 1.54M & 61 & 78 & 7.59 & 4.41 & 8.84\\
WikiTalk & 1.09M & 6.11M & 2.79M & 8.12M & 124 & 1.1K & 6.21 & 34 & 56\\
StackOverflow & 2.58M & 47.9M & 28.18M & 114.2M & 198 & 549 & 7.60 & 347 & 678\\
Wikipedia-DE & 2.17M & 86.21M & 39.71M & 169.9M & 265 & 347 & 10.18 & 576 & 987\\
Bitcoin & 59.61M & 515.5M & 366.4M & 706.2M & 604 & 447K & 5.98 & 2923 & 4374\\
\hline
\end{tabular}
\end{center}
\label{tab:runtime}
\end{table*}

\begin{figure*}[!ht]
\centering
\begin{subfigure}[b]{0.33\textwidth}
\centering
\includegraphics[width=\textwidth]{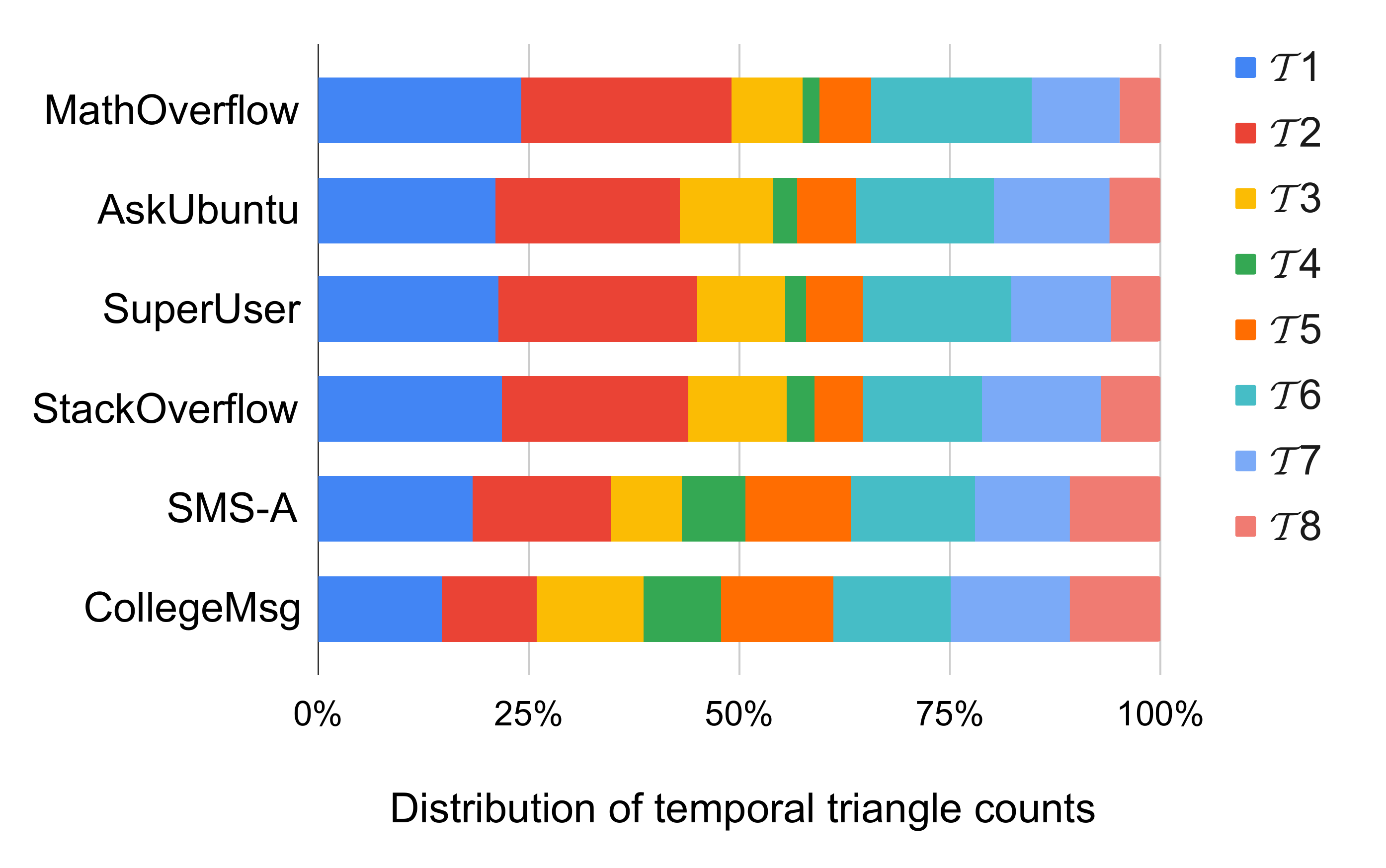}
\caption{Temporal triangle count distribution}
\label{fig:count_dist}
\end{subfigure}
\begin{subfigure}[b]{0.33\textwidth}
\centering
\includegraphics[width=\textwidth]{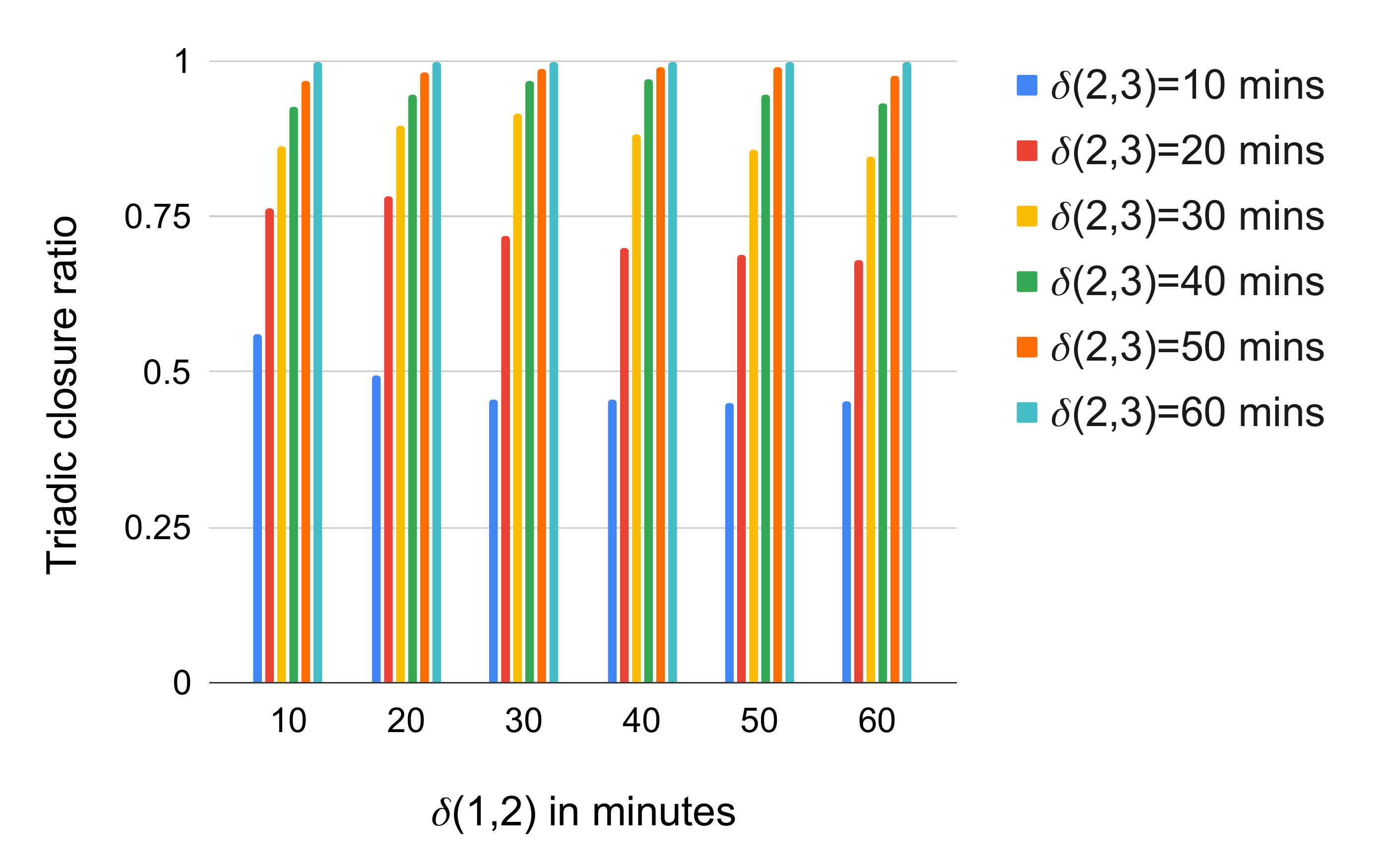}
\caption{Effect of $\delta_{1,2}$ and $\delta_{2,3}$ on triadic closure}
\label{fig:closure_delta1}
\end{subfigure}
\begin{subfigure}[b]{0.33\textwidth}
\centering
\includegraphics[width=\textwidth]{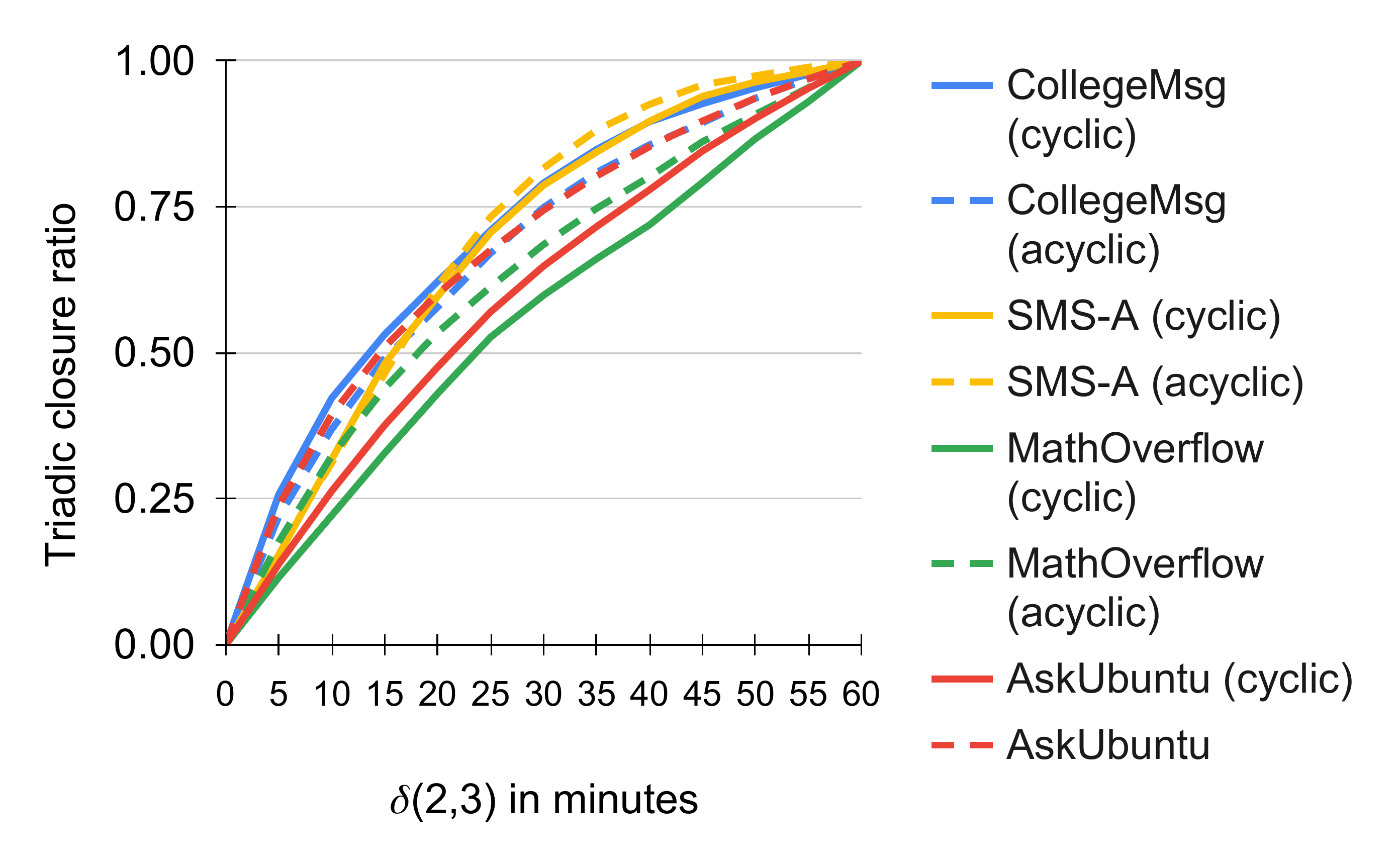}
\caption{Triadic closure in cyclic and acyclic cases}
\label{fig:closure_cyclic}
\end{subfigure}
\caption{(a):The distribution of (1 hr, 1 hr, 1 hr)-temporal triangle counts over all eight temporal triangle types as shown in~\Fig{temp_tri}. (b):We fix $\delta_{1,3}$ to 2 hrs. We vary $\delta_{1,2}$ from 0 to 60 minutes and plot the ratio of (2 hrs, $\delta_{1,2},\delta_{2,3}$)-temporal triangles to (2hrs, $\delta_{1,2}$, 1hr)-temporal triangles for $\delta_{2,3}$ ranging from 0 to 60 minutes. (c) We plot the ratio of (2 hrs, 1 hr, $\delta_{2,3}$)-temporal triangles to (2hrs, 1hr, 1hr)-temporal triangles for $\delta_{2,3}$ ranging from 0 to 60 minutes, for cyclic and acyclic triangles.}
\label{fig:count_exp}
\end{figure*}

We implemented our algorithm in C++ and used a commodity machine from AWS EC2: R5d.2xlarge to run our experiments. This EC2 instance has Intel(R) Xeon(R) Platinum 8175M CPU @ 2.50GHz and 64GB memory. On this AWS machine, \PBL runs out of memory for the Bitcoin graph, so we used one with more than 256GB memory for this case. The implementation of \DOTTT{} is available at~\cite{ettc}.

We performed our experiments on a collection of temporal graphs from SNAP~\cite{snapnets}, KONECT\cite{konect}, and the Bitcoin transaction dataset from~\cite{kondor2014inferring}, consisting of all transactions up to Feb 9, 2018. The timestamp of each transaction is the creation time of the block on the blockchain that contains it\cite{reid2013analysis}.

{\bf Running time:}
All the running times are shown in~\Tab{runtime}. We ran all experiments on a single thread. In most instances, \DOTTT{} takes a few seconds to run. For graphs with tens of millions of temporal edges, \DOTTT{} runs in less than ten minutes. Even for the Bitcoin graph with 515M edges, \DOTTT{} takes less than an hour. 

{\bf Running time independent of time periods:}
The running time of both \DOTTT{} and \PBL algorithms are independent of the time periods. \DOTTT{} has the same running time for time restrictions ranging from 0 to the time span of the input dataset. For comparison with $\PBL$, we set $\delta_{1,3}=\delta_{1,2}=\delta_{2,3}=$ 1 hr.

{\bf Comparison with \PBL:}
We compare our algorithm with the \PBL algorithm that counts $\delta_{1,3}$-temporal triangles, as it is the closest to our work. We typically get a 1.5x-2x speedup over \PBL for large graphs (more than 0.5M edges) as shown in~\Fig{speedup}. Note that \DOTTT{} computes $\threeDelta$-temporal triangle counts while \PBL only gets the counts of $\delta_{1,3}$-temporal triangles.


{\bf Distribution of counts over types of triangles:}
The distribution of (1 hr, 1 hr, 1 hr)-temporal triangle counts for our datasets are shown in~\Fig{count_dist}. As we expected~\cite{milo2004superfamilies,vazquez2004topological,yaverouglu2014revealing,paranjape2017motifs}, networks from similar domains have similar distributions. It is easy to see in~\Fig{count_dist}, that all the stack exchange networks have similar distributions. The same holds for the message networks CollegeMsg and SMS-A.

We observe that cyclic temporal triangles, $\TriType_4$ and $\TriType_8$, have a larger share in temporal triangle counts in messaging networks than in stack exchange networks.


{\bf Triadic closures in temporal networks:}
In static triangles, the transitivity measures the ratio of number of static triangles to the number of all wedges. In temporal graphs, in addition to transitivity, the time it takes for a wedge to appear and close is of importance~\cite{zignani2014link}. In~\Fig{closure_delta1}, we study the effect of the time it takes for a wedge to appear from an edge, on the time it takes to close for CollegeMsg graph. We fix $\delta_{1,3} = $ 2 hrs. For $\delta_{1,2}$ ranging from zero to 60 minutes (10 minute steps), we vary $\delta_{2,3}$ from zero to 60 minutes and plot the ratio of $\threeDelta$-temporal triangles over (2hrs, $\delta_{1,2}$, 1hr)-temporal triangles. We observe that the set of ratios for all values of $\delta_{2,3}$ are almost identical for different values of $\delta_{1,2}$. For instance, for all values of $\delta_{1,2}$, roughly half the triangles are formed in 10-20 minutes. This implies that once a wedge is formed, the time it took to appear does not affect the time it takes to close.

As another demonstration of \DOTTT{}, for $\delta_{1,3} = $ 2 hrs and $\delta_{1,2}=$ 1 hrs, we plot the ratio of $\threeDelta$-temporal triangles to (2hrs, 1hr, 1hr)-temporal triangles, this time separately for cyclic and acyclic temporal triangles in~\Fig{closure_cyclic}. We observe that for stack exchange networks, acyclic temporal triangles tend to take a shorter time to close from the moment their second edge appears than cyclic temporal triangles. As we see in~\Fig{closure_cyclic}, this is not the case for message networks.

\begin{acks}
The authors are supported by NSF DMS-2023495, NSF CCF-1740850, NSF CCF-1813165, CCF-1909790, and ARO Award W911NF1910294. Noujan Pashanasangi is supported by Jack Baskin and Peggy Downes-Baskin Fellowship.
\end{acks}


\bibliographystyle{acm}
\bibliography{ref}

\newpage
\newpage
\clearpage

\appendix
\section{Missing Algorithms from section 5}\label{apx:missing_algo}
Here we will provide the algorithms for counting $\threeDelta$-temporal triangles for temporal orderings $\tempOrder_3$, $\tempOrder_4$, $\tempOrder_5$, and $\tempOrder_6$.

{\bf Temporal orderings $\tempOrder_3$ and $\tempOrder_4$:}
Consider an orientation $\dirType$, and a static triangle on vertices $\{u,v,w\}$ enumerated in $G_\prec$, where $u$ is the source vertex. The category of temporal orderings $\tempOrder_3$ and $\tempOrder_4$ is more intricate because $\tempOrder_3(\{v,w\})=\tempOrder_4(\{v,w\})=2$. Recall that we want to avoid enumerating temporal edges on $\{v,w\}$, so we do not enumerate edges on $S_2$ as in the case of $\tempOrder_1$ and $\tempOrder_2$. Instead, we enumerate edges on $S_1$, and compute the counts of edges on $S_2$ that form a temporal triangle with compatible edge in $S_3$.

We start by enumerating edges on $S_1$. Consider an edge $e_1 \in S_1$. Let $\ell_{\first}$ and $\ell_{\last}$ denote the indices of first and last edge in $S_3$ with a timestamp in $[t(e_1),t(e_1)+\delta_{1,3}]$. Also, let $\ell_{\delta_{1,2}}$ denote the index of the first edge in $S_3$ with a timestamp greater than $t(e_1) + \delta_{1,2}$, and $\ell_{\delta_{2,3}}$ be the index of the first edge in $S_3$ that has a timestamp greater than $t(e_1) + \delta_{2,3}$. Note that $\ell_{\delta_{1,2}}$ and $\ell_{\delta_{2,3}}$ divide $S_3[\ell:\ell_{\last}]$ into three consecutive subsequences. We show how to count temporal triangles that involve temporal edges in each of these three subsequences.

For a temporal edge $S_3[i]$ where $\ell_{\first} \leq i < \min(\ell_{\delta_{1,2}},\ell_{\delta_{2,3}})$, each edge $e_2 \in S_2$ where $t(e_2) \in [t(e_1),t(S_3[i])]$ form a $\threeDelta$-temporal triangle with $e_1$ and $S_3[i]$. To obtain the counts of these edges in $S_2$, it suffices to store $\cumulativeEdgeCount_{-\infty}$ on $S_2$ for each edge $e_3 \in S_3$.

Now, consider the temporal edges $S_3[i]$ where $\max(\ell_{\delta_{1,2}},\ell_{\delta_{2,3}}) < i \leq \ell_{\last}$. The timestamp of these edges are in time window $[t(e_1)+\max(\delta_{1,2},\delta_{2,3}),t(e_1)+\delta_{1,3}]$, and together with temporal edge $e_1$ form a $\threeDelta$-temporal triangle with each temporal edge $e_2 \in S_2$ where $t(e_2) \in [t(S_3[i])-\delta_{2,3},t(e_1)+\delta_{1,2}])$. To count the number of such temporal edges in $S_2$ we only need to store $\cumulativeEdgeCount_{-\delta_{2,3}}$ and $\cumulativeEdgeCount_{-\infty}$ on $S_2$ for each temporal edge $e_3$ in $S_3$.


Finally, consider a temporal edge $S_3[i]$ where $\min(\ell_{\delta_{1,2}},\ell_{\delta_{2,3}}) \leq i \leq \max(\ell_{\delta_1},\ell_{\delta_{2,3}})$. The number of compatible edges in $S_2$ depend on how $\delta_{1,2}$ compares to $\delta_{2,3}$. There are two cases: $(a): \delta_{1,2} < \delta_{2,3}$ and $(b): \delta_{2,3} < \delta_{1,2}$. In case $(a)$ edges in $S_2$ have to have a timestamp in $[t(e_1),t(e_1)+\delta_{1,2}]$ to form a triangle with $e_1$ and $S_3[i]$, and in case $(b)$ their timestamps should be in the time window $[t(S_3[i]) - \delta_{2,3}, t(S_3[i])]$.~\Alg{tempOrder_3_count} give the step by step procedure for counting temporal triangles for temporal orderings $\tempOrder_3$ and $\tempOrder_4$.


\begin{algorithm}[bh]
\caption{Counting $\threeDelta$-temporal triangles corresponding to a static triangle and temporal orientation $\tempOrder_3$ or $\tempOrder_4$}\label{alg:tempOrder_3_count}
\begin{algorithmic}[1]
\Procedure{CTT-vw2}{$\delta_{1,3}$, $\delta_{1,2}$, $\delta_{2,3}$,$\langle u,v,w \rangle$, $\tempOrder$, $\dirType$}
\LeftComment{$\tempOrder(\{v,w\})=2$}
\State count = 0
\State Enumerate $S_3$ and compute $\cumulativeEdgeCount_{-\infty}$ and $\cumulativeEdgeCount_{-\delta_{2,3}}$ on $S_2$
\For{$i= 1,\ldots,\sigma_1$}
\State Let $\ell_{\first} = \textsc{lowerBound}(t(S_1[i]), S_3)$
\State Let $\ell_{\delta_{1,2}} = \textsc{upperBound}(t(S_1[i])+\delta_{1,2}, S_3)$
\State Let $\ell_{\delta_{2,3}} = \textsc{lowerBound}(t(S_1[i])+\delta_{2,3}, S_3)$
\State Let $\ell_{\last}= \textsc{upperBound}(t(S_1[i])+\delta_{1,3}, S_3)$
\State Let $\ell_{\min} = \min(\ell_{\delta_{1,2}},\ell_{\delta_{2,3}})$
\State Let $\ell_{\max} = \max(\ell_{\delta_{1,2}},\ell_{\delta_{2,3}})$
\LeftComment Edges in $S_3[\ell_{\first}:\ell_{\min}]$
\State count $+= \cumulativeEdgeCount_{-\infty}(S_3[\ell_{\first}:\ell_{\min}],S_2)$
\State count $-= (\ell_{\min} - \ell_{\first} + 1)\cdot \edgeCount((-\infty,t(S_1[i])], S_2)$

\LeftComment Edges in $S_3[\ell_{\min}+1:\ell_{\max}-1]$

\If{$\delta_{1,2} \leq \delta_{2,3}$}
\State count $+=(\ell_{\delta_{2,3}} - \ell_{\delta_{1,2}})$
\State\hspace{\algorithmicindent} $\cdot \edgeCount([t(S_1[i]),t(S_1[i])+\delta_{1,2}], S_2)$
\ElsIf{$\delta_{2,3} \leq \delta_{1,2}$}
\State count $+= \cumulativeEdgeCount_{-\delta_{2,3}}(S_3[\ell_{\delta_{2,3}}:\ell_{\delta_{1,2}}],S_2)$
\EndIf

\LeftComment Edges in $S_3[\ell_{\max}:\ell_{\last}]$
\State count $+= \cumulativeEdgeCount_{-\delta_{2,3}}(S_3[\ell_{\max}:\ell_{\last}],S_2)$
\State count $-= \cumulativeEdgeCount_{-\infty}(S_3[\ell_{\max}:\ell_{\last}],S_2)$
\State count += $(\ell_{\last}-\ell_{\max})\cdot \edgeCount((-\infty,t(S_1[i])+\delta_{1,2}], S_2)$
\EndFor
\State \Return count
\EndProcedure
\end{algorithmic}
\end{algorithm}

{\bf Temporal orderings $\tempOrder_5$ and $\tempOrder_6$:} 
This case is similar to the case of temporal ordering $\tempOrder_1$ and $\tempOrder_2$. The difference is that while enumerating edges in $S_2$, we will first find the compatible edges in $S_3$ instead of $S_1$, and then count edges in $S_1$ that complete a temporal triangle. Consider a static triangle $\{u,v,w\}$ and orientation $\dirType$. Fix a temporal edge $e_2$ in $S_2$. Let $\ell_{\first}$ and $\ell_{\last}$ denote the indices of the first and last temporal edges in $S_3$ with a timestamp in the time period $[t(e_2),t(e_2)+\delta_{2,3}]$. Let $\ell_{\delta_{1,2}}$ be the first temporal edge in $S_3$ such that $t(S_3(\ell_{\delta_{1,2}})) > t(e_2) + \delta_{1,3} - \delta_{1,2}$. Since $\delta_{1,3} \leq \delta_{1,2} + \delta_{2,3}$, we have $\ell_{\first} \leq \ell_{\delta_{2,3}} \leq \ell_{\last}$. Consider a temporal edge $S_3[i]$ where $\ell_{\first} \leq i < \ell_{\delta_{1,2}}$. The number of edges in $S_1$ that form a triangle with $e_2$ and $S_3[i]$ is $\edgeCount(S_1,[t(e_2)-\delta_{1,2}, t(e_2)])$. For each temporal edge $S_3[i]$ where $\ell_{\delta_{1,2}} \leq i \leq \ell_{\last}$, there are $\edgeCount(S_1, [t(S_3[i])-\delta_{1,3}, t(e_2)])$ edges that complete a temporal triangle. This is the same as in the case of $\tempOrder_1$ and $\tempOrder_2$ with different time windows. We need to get $\cumulativeEdgeCount_{-\infty}$ and $\cumulativeEdgeCount_{-\delta_{1,3}}$ on $S_1$ for each edge $e_3 \in S_3$.

\begin{algorithm}[bh]
\caption{Counting $\threeDelta$-temporal triangles corresponding to a static triangle and temporal orientation $\tempOrder_5$ or $\tempOrder_6$}\label{alg:tempOrder_5_count}
\begin{algorithmic}[1]
\Procedure{TTC-vw1}{$\delta_{1,3}$, $\delta_{1,2}$, $\delta_{2,3}$,$\langle u,v,w \rangle$, $\tempOrder$, $\dirType$}
\LeftComment{$\tempOrder(\{v,w\})=1$}
\State count = 0
\State Enumerate $S_3$ and compute $\cumulativeEdgeCount_{-\delta_{1,3}}$ and $\cumulativeEdgeCount_{-\infty}$ on $S_1$
\For{$i= 1,\ldots,\sigma_2$}
\State Let $\ell_{\first} = \textsc{lowerBound}(t(S_2[i]), S_3)$
\State Let $\ell_{\delta_{1,2}} = \textsc{lowerBound}(t(S_2[i])+\delta_{1,3}-\delta_{1,2}, S_3)$
\State Let $\ell_{\last}= \textsc{upperBound}(t(S_2[i])+\delta_{2,3}, S_3)$
\LeftComment Edges in $S_3[\ell_{\delta_{1,2}}:\ell_{\last}]$
\State count $+= \cumulativeEdgeCount_{-\delta_{1,3}}(S_3[\ell_{\delta_{1,2}}:\ell_{\last}],S_1)$
\State count $-= \cumulativeEdgeCount_{-\infty}(S_3[\ell_{\delta_{1,2}}:\ell_{\last}],S_1)$
\State count $+= (\ell_{\last}-\ell_{\delta_{1,2}}+1) \cdot \edgeCount((-\infty,t(S_2[i])], S_1)$
\LeftComment Edges in $S_3[\ell_{\first}:\ell_{\delta_{1,2}}-1]$
\State count $+= (\ell_{\delta_{1,2}}-\ell_{\first}) \cdot \edgeCount([t(S_2[i])-\delta_{1,2}, t(S_2[i])], S_1)$
\EndFor
\State \Return count
\EndProcedure
\end{algorithmic}
\end{algorithm}

\end{document}